\renewcommand\footnotetextcopyrightpermission[1]{} 
 \let\mathscr\relax
\newcolumntype{x}[1]{>{\centering\arraybackslash\hspace{0pt}}p{#1}}
\newcolumntype{L}[1]{>{\raggedright\let\newline\\\arraybackslash\hspace{0pt}}m{#1}}
\newcolumntype{C}[1]{>{\centering\let\newline\\\arraybackslash\hspace{0pt}}m{#1}}
\newcolumntype{R}[1]{>{\raggedleft\let\newline\\\arraybackslash\hspace{0pt}}m{#1}}
\definecolor{lightgray}{gray}{0.92}
\definecolor{applegreen}{rgb}{0.13, 0.67, 0.8}
\renewcommand{\@seccntformat}[1]{\csname the#1\endcsname\ \ }
\newcommand{\rfig}[1]{Figure~\ref{#1}}
\newcommand{\rtab}[1]{Table~\ref{#1}}
\newcommand{\rsec}[1]{\S\ref{#1}}
\newcommand{\adj}[1]{\text{adj}(#1)}
\newsavebox{\fminipagebox}
\NewDocumentEnvironment{fminipage}{m O{\fboxsep}}
 {\par\kern#2\noindent\begin{lrbox}{\fminipagebox}
  \begin{minipage}{#1}\ignorespaces}
 {\end{minipage}\end{lrbox}%
  \makebox[#1]{%
    \kern\dimexpr-\fboxsep-\fboxrule\relax
    \fbox{\usebox{\fminipagebox}}%
    \kern\dimexpr-\fboxsep-\fboxrule\relax
  }\par\kern#2
 }
\algnewcommand\algorithmicparfor{\textbf{par-for}}
\algnewcommand\algorithmicpardo{\textbf{do}}
\algnewcommand\algorithmicendparfor{\textbf{end\ par-for}}
\algrenewcommand\algorithmicindent{1.0em}%
\newcommand{\MyComment}[1]{}
\algnewcommand\algorithmicmywhile{\textbf{while}}
\algnewcommand\algorithmicmywhiledo{\textbf{do}}
\algnewcommand\algorithmicendmywhile{\textbf{end while}}
\theoremstyle{plain}
\theoremstyle{definition}
\newcommand{\sysname}{\textsc{Peregrine}}
\newcommand\blindfootnote[1]{%
  \begingroup
  \renewcommand\thefootnote{}\footnote{#1}%
  \addtocounter{footnote}{-1}%
  \endgroup
}
\begin{document}

\title{\sysname{}: A Pattern-Aware Graph Mining System~\footnotemark}
\renewcommand{\shorttitle}{\sysname{}: A Pattern-Aware Graph Mining System}

\author{Kasra Jamshidi}
\affiliation{%
\department{School of Computing Science}
  \institution{Simon Fraser University}
  \state{British Columbia, Canada}
}
\email{kjamshid@cs.sfu.ca}

\author{Rakesh Mahadasa}
\affiliation{%
\department{School of Computing Science}
  \institution{Simon Fraser University}
  \state{British Columbia, Canada}
}
\email{rmahadas@cs.sfu.ca}

\author{Keval Vora}
\affiliation{%
\department{School of Computing Science}
  \institution{Simon Fraser University}
  \state{British Columbia, Canada}
}
\email{keval@cs.sfu.ca}

\begin{abstract}
Graph mining workloads aim to extract structural properties of a graph by exploring its subgraph structures. General purpose graph mining systems provide a generic runtime to explore subgraph structures of interest with the help of user-defined functions that guide the overall exploration process. However, the state-of-the-art graph mining systems remain largely oblivious to the shape (or pattern) of the subgraphs that they mine. This causes them to: (a) explore unnecessary subgraphs; (b) perform expensive computations on the explored subgraphs; and, (c) hold intermediate partial subgraphs in memory; all of which affect their overall performance. Furthermore, their programming models are often tied to their underlying exploration strategies, which makes it difficult for domain users to express complex mining tasks.

In this paper, we develop \mbox{\sysname{}}, a pattern-aware graph mining system that directly explores the subgraphs of interest while avoiding exploration of unnecessary subgraphs, and simultaneously bypassing expensive computations throughout the mining process. We design a pattern-based programming model that treats \emph{graph patterns} as first class constructs and enables \sysname{} to extract the semantics of patterns, which it uses to guide its exploration. 
Our evaluation shows that \sysname{} outperforms state-of-the-art distributed and single machine graph mining systems, and scales to complex mining tasks on larger graphs, while retaining simplicity and expressivity with its `pattern-first' programming approach.
\end{abstract}

\maketitle

\blindfootnote{* This is the full version of the paper appearing in the European Conference on Computer Systems (EuroSys), 2020.}

\section{Introduction}
Graph mining based analytics has become popular across various important domains including bioinformatics, computer vision, and social network analysis~\cite{motifsbiology,fsmcompvis,cliquescompchem,bearman-chains,sna-police,fakenews}.
These tasks mainly involve computing structural properties of the graph, i.e., exploring and understanding the substructures within the graph. Since the search space is exponential, graph mining problems are computationally intensive and their solutions are often difficult to program in a parallel or distributed setting. 

To address these challenges, general-purpose graph mining systems like Arabesque~\cite{arabesque}, RStream~\cite{rstream}, Fractal~\cite{fractal}, G-Miner~\cite{gminer} and AutoMine~\cite{automine} provide a generalized exploration framework and allow user programs to guide the overall exploration process. 
At the heart of these graph mining systems is an exploration engine that exhaustively searches subgraphs of the graph, and a series of filters that prune the search space to continue exploration for only those subgraphs that are of interest (e.g., ones that match a specific pattern) and are unique (to avoid redundancies coming from structural symmetries). The exploration happens in a step-by-step fashion where small subgraphs are iteratively extended based on their connections in the graph. As these subgraphs are explored, they get verified via \emph{canonicality checks} to guarantee uniqueness, and get analyzed via \emph{isomorphism computations} to understand their structure (or pattern). After that, the subgraphs either get pruned out because they don't match the pattern of interest, or are forwarded down the pipeline where their information is aggregated at the pattern level.

\begin{table}
\small
\setlength{\tabcolsep}{1mm}
  \begin{tabular}{c | c c c c c}
   & Arabesque & Fractal  & G-Miner & RStream & PRG-U \\ \midrule
   \sysname{} & 2-1317$\times$ & 1.1-737$\times$ & 3-131$\times$ & 2-2016$\times$ & 2-42$\times$ \\ \midrule
 \end{tabular}
  \vspace{0.05in}
  \caption{\sysname{} performance summary. PRG-U indicates \sysname{} without symmetry breaking, to model systems that are not fully pattern-aware (e.g., AutoMine).}
\label{tab-summaryresults}
\vspace{-0.2in}
\end{table}

While such an exploration process is general enough to compute different mining use cases including Frequent Subgraph Mining and Motif Counting, we observe that it remains largely oblivious to the patterns that are being mined. Hence, state-of-the-art graph mining systems 
face three main issues, as described next:
(1) These systems perform a large number of unnecessary computations; specifically, every subgraph explored from the graph, even in intermediate steps, is processed to ensure canonicality, and is analyzed to either extract its pattern or to verify whether it is isomorphic to another pattern. Since the exploration space for graph mining use cases is very large, performing those computations on every explored subgraph severely limits the performance of these systems. 
(2) The exhaustive exploration in these systems ends up generating a large amount of intermediate subgraphs that need to be held (either in memory or on disk) so that they can be extended. While systems based on breadth-first exploration~\cite{arabesque,rstream} demand high memory capacity, other systems like Fractal~\cite{fractal} and AutoMine~\cite{automine} use guided exploration strategies to reduce this impact; however, because they are not fully pattern-aware, they process a large number of intermediate subgraphs which severely limits their scalability as graphs grow large.
(3) The programming model in these systems is strongly tied to the underlying exploration strategy, which makes it difficult for domain experts to express complex mining use cases. For example, subgraphs containing certain pairs of strictly disconnected vertices (i.e., absence of edges) are useful for providing recommendations based on missing edges; mining such subgraphs with constraints on their substructure cannot be directly expressed in any of the existing systems.

In this paper, we take a `pattern-first' approach towards building an efficient graph mining system. We develop \mbox{\sysname{}}~\footnote{\sysname{} source code: https://github.com/pdclab/peregrine}, a pattern-aware graph mining system that directly explores the subgraphs of interest while avoiding exploration of unnecessary subgraphs, and simultaneously bypassing expensive computations (isomorphism and canonicality checks) throughout the mining process. \sysname{} incorporates a pattern-based programming model that enables easier expression of complex graph mining use cases, and reveals patterns of interest to the underlying system. Using the pattern information, \sysname{} efficiently mines relevant subgraphs by performing two key steps. First, it analyzes the patterns to be mined in order to understand their substructures and to generate an exploration plan describing how to efficiently find those patterns. And then, it explores the data graph using the exploration plan to guide its search and extract the subgraphs back to the user space. 

Our pattern-based programming model treats \emph{graph} \mbox{\emph{patterns}} as first class constructs: it provides basic mechanisms to load, generate and modify patterns along with interfaces to query patterns in the data graph. Furthermore, we introduce two novel abstractions, an \textsc{Anti-Edge} and an \textsc{Anti-Vertex}, that express advanced structural constraints on patterns to be matched. This allows users to directly operate on patterns and express their analysis as `pattern programs' on \sysname{}. Moreover, it enables \sysname{} to extract the semantics of patterns which it uses to generate efficient exploration plans for its pattern-aware processing model. 

We rely on theoretical foundations from existing subgraph matching research~\cite{po,postponecart} to generate our exploration plans. 
Since \sysname{} directly finds the subgraphs of interest, it does not incur additional processing over those subgraphs throughout its exploration process; this directly results in much lesser computation compared to the state-of-the-art graph mining systems. Moreover, \sysname{} does not maintain intermediate partial subgraphs in memory, resulting in much lesser memory consumption compared to other systems.

\sysname{} runs on a single machine and is highly concurrent. We demonstrate the efficacy of \sysname{} by evaluating it on several graph mining use cases including frequent subgraph mining, motif counting, clique finding, pattern matching (with and without structural constraints), and existence queries. Our evaluation on real-world graphs shows that \sysname{} running on a single 16-core machine outperforms state-of-the-art distributed graph mining systems including Arabesque~\cite{arabesque}, Fractal~\cite{fractal} and G-Miner~\cite{gminer} running on a cluster with eight 16-core machines; and significantly outperforms RStream~\cite{rstream} running on the same machine. Furthermore, \sysname{} could easily scale to large graphs and complex mining tasks which could not be handled by other systems. \rtab{tab-summaryresults} summarizes \sysname{}'s performance.

\section{Background \& Motivation}
We first briefly review graph mining fundamentals, and then discuss performance and programmability issues in state-of-the-art graph mining systems. In the end, we give an overview of \sysname{}'s pattern-aware mining techniques.

\begin{figure*}[t]
   \centering
   \begin{minipage}{0.49\textwidth}
     \subfloat[Step-by-step exploration in graph mining systems starting at vertex 1 and vertex 3. In total, 13 partial matches get explored and 13 canonicality checks are performed that prune out 5 partial matches. Isomorphism checks are performed on the remaining 8 matches for applications like FSM.]{
    \includegraphics[height=1.6in]{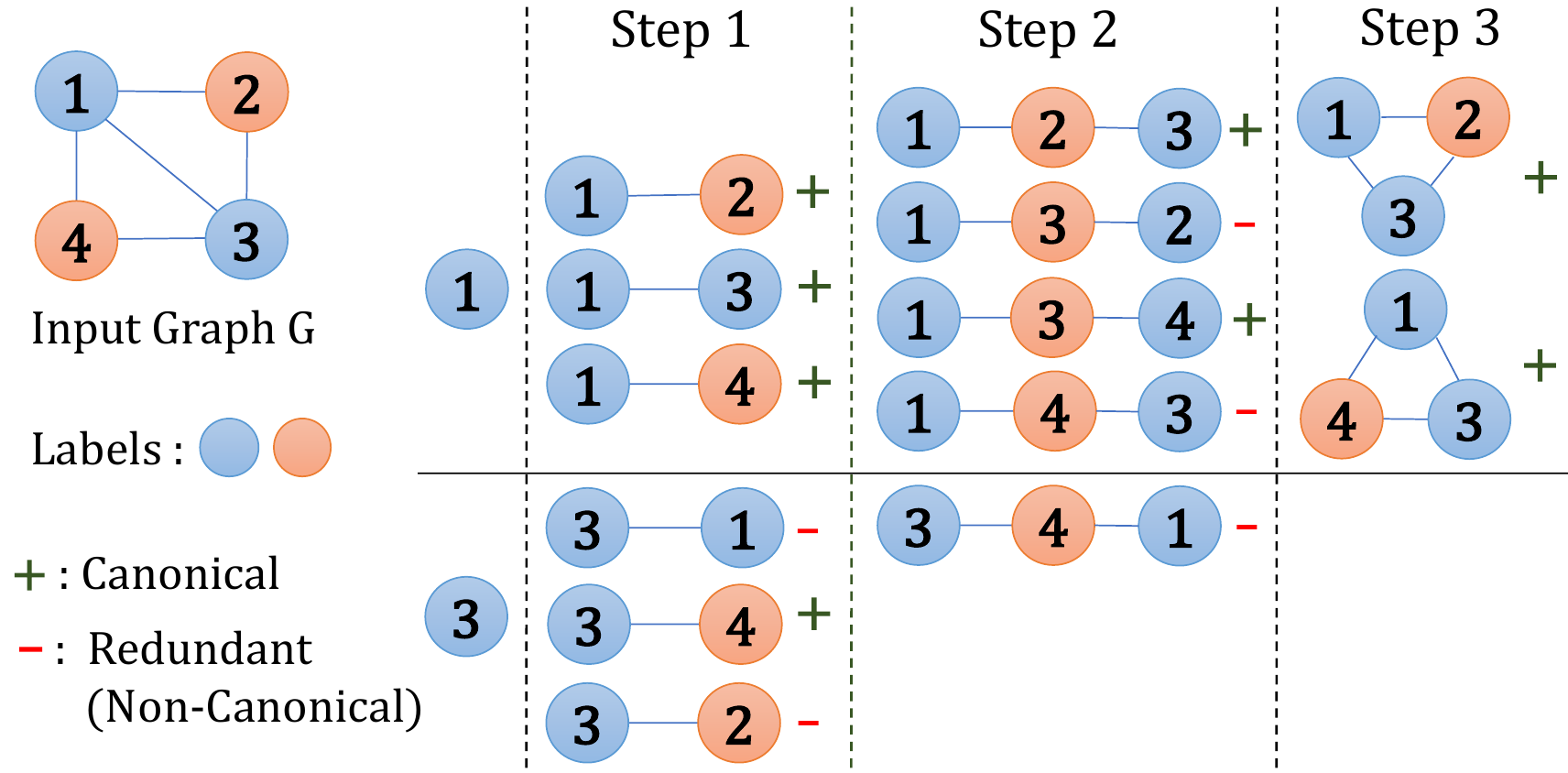}
    \label{fig-unawarematchingexample}
    }
    \end{minipage}
    \begin{minipage}{0.49\textwidth}
    {
    \small
    \subfloat[Profiling results for 4-Clique Counting on Patents~\cite{patents} which contains $\sim$3.5M cliques of size 4. Isomorphism counts are 0 for RStream and Fractal because they have native support for clique computation.]{
        \begin{tabular}{c|ccc}
         \multirow{2}{*}{System} & Total & Canonicality & Isomorphism \\
         & Matches &  Computations & Computations \\  \hline
          RStream & 1.2B (342$\times$) & 33.0M & 0  \\ 
          Arabesque & 1.4B (400$\times$) & 1.4B & 3.5M \\ 
          Fractal & 659.0M (188$\times$) & 599.6M & 0 \\ 
        \hline
        \end{tabular}
        \label{tab-unawarecliquenumbers}
    }
    
    \subfloat[Profiling results for 3-Motif Counting on Patents~\cite{patents} which contains $\sim$320M 3-sized motifs.]{
        \begin{tabular}{c|ccc}
         \multirow{2}{*}{System} & Total & Canonicality & Isomorphism \\
         & Matches &  Computations & Computations \\  \hline
          RStream & 40.1B (125$\times$) & 40.1B & 343.3M \\ 
          Arabesque & 685.8M (2.1$\times$) & 685.8M & 320.7M \\ 
          Fractal  & 665.6M (2.1$\times$) & 649.1M & 320.7M \\ 
         \hline
        \end{tabular}
        \label{tab-unawaremotifnumbers}
    }
    }
    \end{minipage}
    \vspace{-0.1in}
    \caption{Left: Example illustrating step-by-step exploration; Right: Number of matches explored (partial and full), canonicality checks performed, and isomorphism checks performed by RStream~\cite{rstream}, Arabesque~\cite{arabesque} and Fractal~\cite{fractal}. Numbers in brackets indicate the magnitude of matches explored relative to result size.}
    \label{fig-motivationplusdata}
    \vspace{-0.05in}
\end{figure*} 

\subsection{Graph Mining Overview}
\label{sec-graph-mining-overview}
\paragraph{Graph Terminology.} 
Given a graph $g$, we use $V(g)$ and $E(g)$ to denote its set of vertices and edges respectively. If the graph is labeled, we use $L(g)$ to denote its set of labels. A \emph{subgraph} $s$ of $g$ is a graph containing a subset of edges in $g$ and their endpoints.
\paragraph{Graph Mining Model.} 
Graph mining problems involve finding subgraphs of interest in a given input graph. We use $P$ to denote the \emph{pattern graph} (representing structure of interest) and $G$ to denote the input \emph{data graph}. 
We define a \emph{\underline{match}} $M$ as a subgraph of $G$ that is \emph{isomorphic} to $P$, where isomorphism is defined as a one-to-one mapping between $V(P)$ and $V(M)$ such that if two vertices are adjacent in $P$, then their corresponding vertices are adjacent in $M$. There are two kinds of matches depending on how vertices and edges from $G$ are extracted in $M$. An \emph{edge-induced match} is any subgraph of $G$ that is isomorphic to $P$. A \emph{vertex-induced match} is a subgraph of $G$ that is isomorphic to $P$ while containing all the edges in $E(G)$ that are incident on $V(M)$.

Since there can be sub-structure symmetries within $P$ (e.g., a triangle structure looks the same when it is rotated), the same subgraph of $G$ can result in multiple different matches each with a different one-to-one mapping with $V(P)$. These matches are \emph{automorphisms} of each other, where automorphic matches are defined as two matches $M_1$ and $M_2$ such that $V(M_1) = V(M_2)$. A \emph{canonical match} is the unique representative of a set of automorphic matches. Hence, uniqueness is ensured by choosing the canonical match from every set of automorphic matches in $G$.

\newpage

While the techniques presented in this paper work for both directed and undirected graphs, for easier exposition, we assume that $P$ and $G$ are undirected.

\vspace{-0.1in}
\paragraph{Graph Mining Problems.}
We briefly describe common graph mining problems. While the problems listed below focus on \emph{counting} subgraphs of interest, they are often generalized to \emph{listing} (or \emph{enumerating}) as well. 
\\[0.03in] 
\emph{--- Motif Counting.} 
A motif is any connected, unlabeled graph pattern. The problem involves counting the occurrences of all motifs in $G$ up to a certain size. 
\\[0.03in] 
\emph{--- Frequent Subgraph Mining (FSM).} 
The problem involves listing all labeled patterns with $k$ edges that are frequent in $G$ (i.e., frequency of their matches in $G$ exceed a threshold $\tau$). The frequency of a pattern (also called support) is measured in a variety of ways~\cite{mis-support,app-specific-support,independence-support,merged-support}, but most systems choose the \emph{minimum node image (MNI)}~\cite{fsm-mni} support measure since it can be computed efficiently. MNI is anti-monotonic, i.e., given two patterns $p$ and $p'$ such that $p$ is a subgraph of $p'$, support of $p$ will be at least as high as that of $p'$. 
\\[0.03in] 
\emph{--- Clique Counting.}
A $k$-clique is a fully-connected graph with $k$ vertices. The problem involves counting the number of $k$-cliques in $G$. Variations of this problem include counting \emph{pseudo-cliques}, i.e., patterns whose edges exceed some density threshold; \emph{maximal cliques}, i.e., cliques that are not contained in any other clique; and, \emph{frequent cliques}, i.e., cliques that are frequent (exceeding a frequency threshold). \\[0.03in] 
\emph{--- Pattern Matching.}
The problem involves matching (counting) the number of subgraphs in $G$ that are isomorphic to a given pattern. A variation of this problem is counting \emph{constrained subgraphs}, i.e., subgraphs with structural constraints (e.g., certain vertices in the subgraph must not be adjacent). 

All of the above graph mining use cases can be modelled in 3 steps: pattern selection, pattern matching, and aggregation. 

\subsection{Issues with Graph Mining Systems}
\label{sec-issueswithgraphminingsystems}
While several graph mining systems have been developed \cite{arabesque,fractal,gminer,rstream,automine}, they are not pattern-aware. Hence, they demand high computation power and require large memory (or storage) capacity, while also lacking the ability to easily express mining programs at a high level. 

\subsubsection{Performance}
\paragraph{(A) High Computation Demand.} 
Graph mining systems explore subgraphs in a step-by-step fashion by starting with an edge and iteratively extending matches depending on the structure of the data graph. Since they do not analyze the structure of the pattern to guide their exploration, they perform a large number of: (a) unnecessary explorations; (b) canonicality checks; and, (c) isomorphism checks. 

\rfig{fig-unawarematchingexample} shows an example of step-by-step exploration starting from vertex 1 and vertex 3. In step 1, both the vertices get extended generating 6 partial matches each of size 1 (edges). These are tested for canonicality which prunes out $(3, 1)$ and $(3, 2)$ (non-canonical matches are marked with $-$). For applications like FSM, isomorphism checks are performed on each of the canonical matches to identify their structure and compute metrics. Then, the remaining 4 matches progress to the next step and the entire process repeats.
While explorations get pruned via both canonicality and isomorphism checks, every valid partial match is extended to multiple matches which may no longer be valid; generation of intermediate matches which do not result into valid final matches is unnecessary. Furthermore, all intermediate partial matches (unnecessary and valid matches) are operated upon to identify their structure (i.e., isomorphism check) and to verify their uniqueness (i.e., canonicality check). 
In our example, 13 intermediate matches get generated, 5 of which are unnecessary; 13 canonicality checks and 8 isomorphism checks are performed. 
If these checks are not performed at every step (as done in Fractal~\cite{fractal} by delaying its \texttt{filter} step), a massive amount of partial and complete matches that do not contribute to final result would get generated.

We verified the above behavior by profiling graph mining systems on clique counting and motif counting applications. As shown in \rfig{tab-unawarecliquenumbers} and \rfig{tab-unawaremotifnumbers}, on Patents~\cite{patents} (a real-world graph dataset), 
RStream~\cite{rstream} and Arabesque~\cite{arabesque} generate over a billion partial matches for clique counting while the total number of cliques is only $\sim$3.5M ($\sim$99.7\% matches were unnecessary); similarly for motif counting, RStream generates over 40 billion partial matches ($\sim$99.2\% unnecessary) and Arabesque generates over 685 million partial matches ($\sim$52\% unnecessary). They also perform a large number (hundreds of millions to billions) of canonicality checks and isomorphism checks. Since Fractal~\cite{fractal} explores in depth-first fashion, its numbers are better than RStream and Arabesque; however, they are still very high. 

\paragraph{(B) High Memory Demand.} 
Graph mining systems often hold massive amounts of (partial and complete) matches in memory and/or in external storage. 
Systems based on step-by-step exploration require valid partial matches so that they can be extended in subsequent steps; the total size (in bytes) required by all matches (partial and complete) quickly grows (often beyond main memory capacity) as the size of the pattern or data graph increases. Such a memory demand is lower in DFS-based exploration (as done in Fractal~\cite{fractal}). For clique and motif counting in \rfig{fig-motivationplusdata}, Arabesque consumes $\sim$101GB main memory while Fractal requires $\sim$32GB memory.
    
\subsubsection{Programmability}
Programming in graph mining systems is done at vertex and edge level, with semantics of constructing the required matches defined explicitly by user's mining program. 
This means, mining programs expressed in those systems contain the logic for: (a) validating partial and complete matches; (b) extending matches via edges and/or vertices; and, (c) processing the final valid matches. As the size of subgraph structure to be mined grows, the complexity of validating partial matches increases, making mining programs difficult to write. For example, the multiplicity algorithm to avoid over-counting in AutoMine~\cite{automine} cannot be used if the user wants to enumerate patterns, which leaves the responsibility of identifying unique matches to the user. Furthermore, complicated structural constraints beyond the presence of vertices, edges and labels cannot be easily expressed in any of the existing systems.

\subsection{Overview of \sysname{}}
We develop a pattern-aware graph mining system that directly finds subgraphs of interest without exploring unnecessary matches while simultaneously avoiding expensive isomorphism and canonicality checks throughout the mining process. We do so by designing a pattern-based programming model that treats \emph{graph patterns} as first class constructs, and by developing a processing model that uses the pattern's substructure to guide the exploration process.

\paragraph{Pattern-based Programming.}
In \sysname{}, graph mining tasks are directly expressed in terms of subgraph structures (i.e., graph patterns). Our pattern-aware programming model allows declaring (statically and dynamically generated) patterns, modifying patterns, and performing user-defined operations over matches explored by the runtime. This allows concisely expressing mining programs by abstracting out the underlying runtime details, and focusing only on the substructures to be explored. Moreover, we introduce two novel abstractions, \emph{anti-edges} and \emph{anti-vertices}: an anti-edge enforces strict disconnection between two vertices in the match whereas an anti-vertex captures strict absence of a common neighbor among vertices in the match. These abstractions allow users to easily express advanced structural constraints on patterns to be mined.

\paragraph{Automatic Generation of Exploration Plan.}
With patterns of interest directly expressed, \sysname{} analyzes the patterns and computes an \emph{exploration plan} which is later used to guide the exploration in the data graph. Specifically, the pattern is first analyzed to eliminate symmetries within itself so that expensive canonicality checks during exploration can be avoided. Then the pattern is reduced to its \emph{core substructure} that enables identifying matches using simple graph traversals and adjacency list intersection operations without performing explicit isomorphism checks.

\paragraph{Guided Pattern Exploration.}
After the exploration plan is generated, \sysname{} starts the exploration process using our pattern-aware processing model. The exploration process matches the core substructure of the pattern to generate partial matches using recursive graph traversals in the data graph. As partial matches get generated, they are extended to form final complete matches by intersecting the adjacency lists of vertices in the partial matches. Since the entire exploration is guided by the plan generated from the pattern of interest, the exploration does not require intermediate isomorphism and canonicality checks for any of the partial and complete matches that it generates. This reduces the amount of computation done in \sysname{} compared to state-of-the-art graph mining systems. Moreover, since matches are recursively explored and instantly extended to generate complete final results, partial state is not maintained in memory throughout the exploration process which significantly reduces the memory requirement for \sysname{}.

\vspace{0.1in}
Finally, we reduce load imbalance in \sysname{} by enforcing a strict matching order based on vertex degrees. 
Furthermore, we incorporate on-the-fly aggregation and early termination features to provide global updates as mining progresses so that exploration can be stopped once the conditions required to compute final results are met.

\begin{figure}[t]
\small
\begin{lstlisting}[numbers=none,xleftmargin=0.5em]
[L1] Set<Pattern> loadPatterns(String filename);
[G1] Set<Pattern> generateAllEdgeInduced(Int size);
[G2] Set<Pattern> generateAllVertexInduced(Int size);
[S1] Pattern generateClique(Int size);
[S2] Pattern generateStar(Int size);
[S3] Pattern generateChain(Int size);
[C1] Set<Pattern> extendByEdge(Set<Pattern> patterns);
[C2] Set<Pattern> extendByVertex(Set<Pattern> patterns);
\end{lstlisting}
\hrule
\begin{lstlisting}[numbers=none]
class Pattern {
	Set<Vertex> getNeighbors(Vertex u);
	Label getLabel(Vertex u);
	Bool areConnected(Vertex src, Vertex dst);
	Void addEdge(Vertex src, Vertex dst);
	Void addAntiEdge(Vertex src, Vertex dst);
	Void removeEdge(Vertex src, Vertex dst);
	Void addLabel(Vertex u, Label l);
	. . .
};
\end{lstlisting}
\vspace{-0.2in}
\caption{\sysname{} Pattern Interface.}
\label{fig-patternapi}
\vspace{-0.05in}
\end{figure}

\section{\sysname{} Programming Model}
Since graph mining fundamentally involves finding subgraphs that satisfy certain structural properties, we design our programming model around \emph{graph patterns} as first class constructs. This allows users to easily express the subgraph structures of interest, without worrying about the underlying mechanisms of how to explore the graph and find those structures. With such a declarative style of expressing patterns, \sysname{} enables users to program complex mining queries as operations over the matches. The clear separation of \emph{what to find} and \emph{what to do with the results} helps users to quickly reason about  correctness of their mining logic, and develop advanced mining-based analytics. 

We first present how patterns are directly expressed in \sysname{}, and then show how common graph mining use cases can be programmed with patterns in \sysname{}.

\subsection{\sysname{} Patterns}
\rfig{fig-patternapi} shows our API to directly express, construct and modify connected graph patterns. Patterns can be constructed statically and loaded using \texttt{[L1]}, or can be constructed dynamically \texttt{[G1-G2, C1-C2, S1-S3]}. \texttt{[G1]} and \texttt{[G2]} generate all unique patterns that can be induced by certain number of edges and vertices respectively. \texttt{[S1-S3]} generate special well-known patterns. \texttt{[C1-C2]} take a group of patterns as input, and extend one of them by an edge or a vertex, to return all of the unique new patterns that result from these extensions. This allows constructing patterns step-by-step which is useful to perform guided exploration. The \texttt{Pattern} class provides a standard interface to access and modify the pattern graph structure.

In most common applications, the edges and vertices in the pattern graph are sufficient for \sysname{} to find subgraph structures that match the pattern. For advanced mining use cases that require structural constraints within the pattern, we introduce \emph{anti-edges} and \emph{anti-vertices}. 

\subsubsection{Anti-Edges}
Anti-edges are used to model constraints between vertex pairs in the pattern. They are special edges indicating disconnections between pairs of vertices. 
For example, in a social network graph where vertices model people and edges model their friendships, extracting unrelated people with at least two mutual friends can be achieved using $p_a$ in \rfig{fig-not-examples} where $(u_2, u_4)$ is an anti-edge.

An anti-edge ensures that the two vertices in the match do not have an edge between them in the data graph. If two vertices $u_1$ and $u_2$ are connected via an anti-edge in a pattern $p$, then any match for $p$ guarantees the anti-edge constraint:
\begin{align*}
  m(u_1) = v_1 \land m(u_2) = v_2 \implies (v_1, v_2) \not\in E(G)
\end{align*}
where $m$ is the function mapping vertices in $p$ to vertices in $G$. The two vertices connected by an anti-edge are called \emph{anti-adjacent}. \rfig{fig-not-examples} shows another example pattern ($p_b$) with anti-adjacent vertices $u_1$-$u_3$ and $u_2$-$u_4$, and their corresponding anti-edges $(u_1, u_3)$ and $(u_2, u_4)$. \sysname{} natively supports anti-edges (discussed in \rsec{sec-matching-anti-edges}), and hence, it directly matches only those subgraphs that do \emph{not} contain an edge between the two anti-adjacent vertices.

\begin{figure}[t]
\centering
  \includegraphics[width=3.3in]{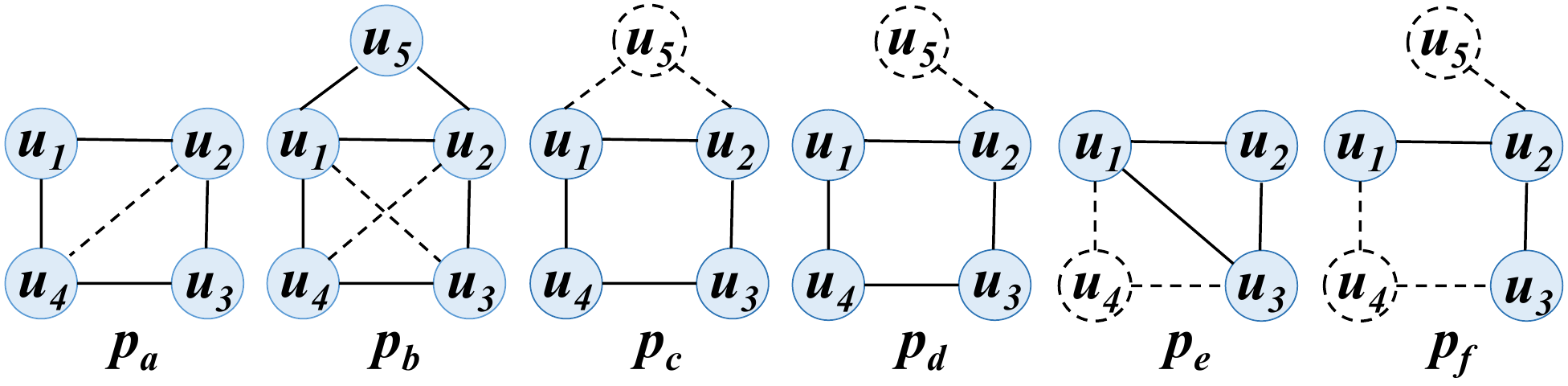}
  \vspace{-0.1in}
  \caption{Example patterns with Anti-Edges and Anti-Vertices.}
  \label{fig-not-examples}
  \vspace{-0.15in}
\end{figure}

\subsubsection{Anti-Vertices}
Anti-vertices are used to model constraints among shared neighborhoods of vertices in the pattern. They are special vertices that are only connected to other vertices via anti-edges. For example, extracting pairs of friends with only one mutual friend in a social network graph can be achieved using $p_e$ in \rfig{fig-not-examples} where $u_4$ is an anti-vertex. Such a query cannot be directly expressed using anti-edges alone. 

Anti-vertices represent absence of a vertex. 
So a match of a pattern with an anti-vertex will not contain a data vertex matching the anti-vertex. If $\overline{u}$ is an anti-vertex in a pattern $p$ and $S$ is the set of data vertices matching the neighbors of $\overline{u}$, then any match for $p$ guarantees the anti-vertex constraint:
\begin{align*}
  S = m(\adj{\overline{u}}) \implies \bigcap_{v \in S} \adj{v} \setminus m(\adj{m^{-1}(v)}) = \varnothing
\end{align*}
where $m$ is the function mapping vertices in $p$ to vertices in $G$, and $m^{-1}$ is its inverse.

\begin{figure*}
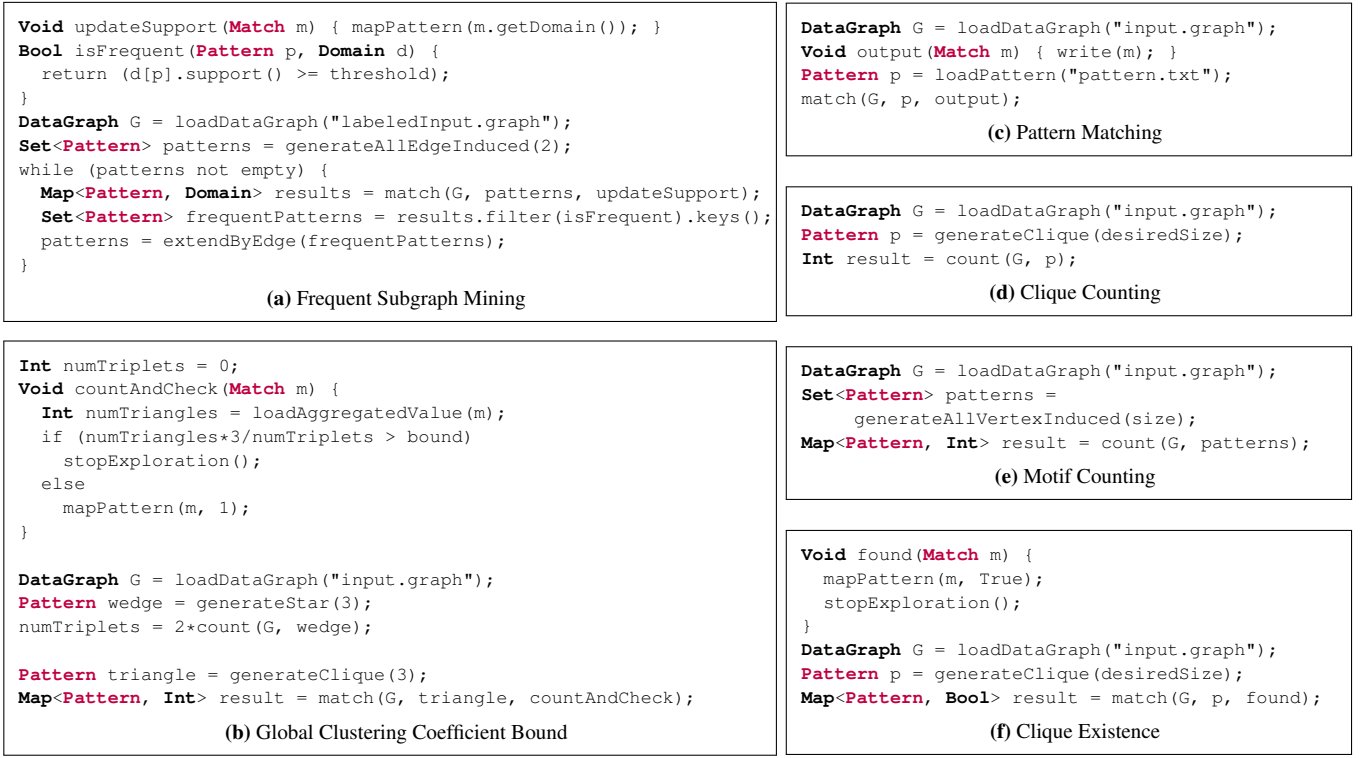

   \begin{minipage}{0.565\textwidth}
    {
    \begin{fminipage}{\textwidth}
      \subfloat[Frequent Subgraph Mining]{
        \makebox[\textwidth][l]{\lstinputlisting[numbers=none]{code/fsm.txt}}
        \label{fsm_example}
      }
      \end{fminipage}

      \begin{fminipage}{\textwidth}
      \subfloat[Global Clustering Coefficient Bound]{
        \makebox[\textwidth][l]{\lstinputlisting[numbers=none]{code/cluster.txt}}
        \label{clustering_example}
      }
       \end{fminipage}
    }
    \end{minipage}
    \hfill
    \begin{minipage}{0.41\textwidth}
    {
    \begin{fminipage}{\textwidth}
      \subfloat[Pattern Matching]{
        \makebox[\textwidth][l]{\lstinputlisting[numbers=none]{code/pm.txt}}
        \label{pm_example}
      }
       \end{fminipage}

\vspace{0.06in}
    \begin{fminipage}{\textwidth}
      \subfloat[Clique Counting]{
        \makebox[\textwidth][l]{\lstinputlisting[numbers=none]{code/cliques.txt}}
        \label{cliques_example}
      }
       \end{fminipage}

\vspace{0.06in}
    \begin{fminipage}{\textwidth}
      \subfloat[Motif Counting]{
        \makebox[\textwidth][l]{\lstinputlisting[numbers=none]{code/motifs.txt}}
        \label{motifs_example}
      }
       \end{fminipage}

\vspace{0.06in}             
     \begin{fminipage}{\textwidth}
      \subfloat[Clique Existence]{
        \makebox[\textwidth][l]{\lstinputlisting[numbers=none]{code/exists.txt}}
        \label{exists_example}
      }
       \end{fminipage}
  }
    \end{minipage}
    \vspace{-0.1in}
  \caption{Graph mining use cases in \sysname{}'s pattern-aware programming model.}
  \label{program_examples}
  \vspace{-0.1in}
\end{figure*}

To distinguish anti-vertices, we call a vertex with at least one regular edge (i.e., not anti-edge) a regular vertex. 
\rfig{fig-not-examples} shows different patterns with anti-vertices:
In a match for $p_c$, the matches for $u_1$ and $u_2$ have no common neighbors. On the other hand, in a match for $p_d$, the match for $u_2$ has no neighbors other than the matches for $u_1$ and $u_3$. 
Finally, $p_f$ has two anti-vertices which combines constraints from $p_c$ and $p_d$.
\sysname{} natively supports anti-vertices (discussed in \rsec{sec-matching-anti-vertices}), and hence, it directly matches only those subgraphs that satisfy absence of vertices across neighborhoods as defined by anti-vertex constraint.

\subsubsection{Edge-Induced and Vertex-Induced Patterns}
Depending on the mining use case, the matches for a given pattern must be either \emph{edge-induced} or \emph{vertex-induced}. For example, Frequent Subgraph Mining (FSM) relies on edge-induced matches, whereas Motif Counting requires vertex-induced matches (programs shown in \rsec{sec-pattern-aware-mining-programs}). Our pattern-based programming model allows exploring subgraphs in both edge-based and vertex-based fashion as discussed next. 

An edge-induced match is a subgraph $s_e$ of $G$ such that the subgraph of $G$ induced by $E(s_e)$ is isomorphic to $p$. Note that, by definition, 
the subgraph induced by $E(s_e)$ is equal to $s_e$. Hence, edge-induced matches are directly expressed by the pattern.

A \emph{vertex-induced} match, on the other hand, is a subgraph $s_v$ of $G$ such that the subgraph of $G$ induced by $V(s_v)$ is isomorphic to $p$. Hence, the sets of vertex-induced and edge-induced matches of $p$ are not equal mainly because the vertices of $s_e$ can have more edges between them in $G$ than are present in $p$ (in general, the subgraph induced by $V(s_e)$ is not isomorphic to $p$). In our pattern-based programming approach, the vertex-induced requirement gets directly expressed using anti-edges. Specifically, to find vertex-induced matches of a pattern $p$, we use the following result:
\begin{theorem}
Let $p$ be a pattern, and $p'$ be another pattern such that $p'$ has the same vertices and edges as $p$, and every pair of vertices in $p$ that are not adjacent are anti-adjacent in $p'$. The set of vertex-induced matches of $p$ is equal to the set of edge-induced matches of $p'$.
\end{theorem}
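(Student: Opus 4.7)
The plan is to establish set equality by proving containment in both directions, using only the definitions of vertex-induced match, edge-induced match, and the anti-edge semantics stated in the excerpt. Throughout, let $m$ denote the bijection that witnesses the isomorphism with $p$ (or $p'$), and recall that by construction $V(p')=V(p)$ and the set of regular edges of $p'$ equals $E(p)$; the anti-edges of $p'$ are exactly the non-edges of $p$.

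First I would prove the forward inclusion: every vertex-induced match of $p$ is an edge-induced match of $p'$. Let $s_v$ be a vertex-induced match of $p$, so the subgraph of $G$ induced by $V(s_v)$ is isomorphic to $p$ via some bijection $m\colon V(p)\to V(s_v)$. Since $p$ and $p'$ share the same regular edges, the edge-induced subgraph on $E(s_v)$ is isomorphic to $p'$ as well (viewing $p'$ only through its regular edges). The only thing left to verify is the anti-edge constraint. For any anti-edge $(u_1,u_2)$ of $p'$, $u_1$ and $u_2$ are non-adjacent in $p$; because $s_v$ is vertex-induced and the induced subgraph on $V(s_v)$ is isomorphic to $p$, the pair $(m(u_1),m(u_2))$ cannot lie in $E(G)$, else the induced subgraph would carry an extra edge and fail to be isomorphic to $p$. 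Hence $s_v$ satisfies every anti-edge constraint of $p'$ and qualifies as an edge-induced match of $p'$.

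Next I would prove the reverse inclusion: every edge-induced match of $p'$ is a vertex-induced match of $p$. Let $s_e$ be an edge-induced match of $p'$ via a bijection $m\colon V(p')\to V(s_e)$. Because the regular edges of $p'$ coincide with the edges of $p$, the edge-induced subgraph $s_e$ is already isomorphic to $p$ as an abstract graph. What remains is to show that the induced subgraph of $G$ on $V(s_e)$ contains no edges beyond those in $E(s_e)$. Take any two distinct vertices $v_1,v_2\in V(s_e)$ with preimages $u_1=m^{-1}(v_1)$, $u_2=m^{-1}(v_2)$ in $p$. If $u_1,u_2$ are adjacent in $p$, the corresponding edge already appears in $s_e$. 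If they are not adjacent in $p$, then by the construction of $p'$ they are anti-adjacent in $p'$, so the anti-edge semantics force $(v_1,v_2)\notin E(G)$. Therefore the subgraph of $G$ induced by $V(s_e)$ coincides with $s_e$ and is isomorphic to $p$, making $s_e$ a vertex-induced match of $p$.

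Combining both directions yields the claimed set equality. The argument is essentially a careful unpacking of definitions, so I do not expect a serious obstacle; the one place to be cautious is ensuring that ``edge-induced match of $p'$'' is interpreted consistently with the anti-edge semantics defined earlier in the excerpt, i.e., treating the anti-edges of $p'$ as constraints on $G$ rather than as actual edges contributing to $E(s_e)$. Once that is made explicit, the two implications become symmetric readings of the same correspondence between non-edges of $p$ and anti-edges of $p'$.
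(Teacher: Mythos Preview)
Your proposal is correct and follows essentially the same approach as the paper: a two-direction containment argument that uses the anti-edge constraint to rule out extra edges in one direction and the vertex-induced property to verify the anti-edge constraints in the other. The only cosmetic difference is that you present the two inclusions in the opposite order from the paper.
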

\begin{proof}
  To prove equivalence of the two sets, we first show that every edge-induced match of $p'$ is a vertex-induced match of $p$, and then we show that every vertex-induced match of $p$ is an edge-induced match of $p'$.
 
  Let $m$ be an edge-induced match for $p'$.
  Observe that as $p$ and $p'$ contain the same edges and vertices, $m$ is also a match for $p$.
  By definition of $p'$, for any edge $(u, v)\not\in E(p)$, there is an anti-edge constraint between $u$ and $v$ in $p'$.
  Since $m$ is a match for $p'$, it satisfies this anti-edge constraint, such that $m(u)$ and $m(v)$ are not adjacent in the data graph.
  This means there is an edge between $m(u)$ and $m(v)$ if and only if $(u, v) \in E(p)$.
  Therefore, $m$ is a vertex-induced match of $p$.

  Conversely, let $m$ be a vertex-induced match for $p$.
  Since $m$ is isomorphic to $p$, it contains all edges of $p$.
  Furthermore, $m$ is vertex-induced, so if a pair of pattern vertices $u_1, u_2$ in $p$ are not adjacent, then the corresponding data vertices $m(u_1)$ and $m(u_2)$ are not adjacent either.
  Hence, $m$ satisfies the anti-edge constraint for $u_1, u_2$.
  As this holds for all pairs of non-adjacent vertices in $p$, $m$ is also a match for $p'$.
 
\end{proof}

\vspace{-0.03in}
Hence, our pattern-based programming doesn't need to separately define the exploration strategy, as done in other pattern-unaware systems~\cite{arabesque,rstream}. 

\subsection{Pattern-Aware Mining Programs in \sysname{}}
\label{sec-pattern-aware-mining-programs}
\rfig{program_examples} shows \sysname{} programs for motif counting, frequent subgraph mining (FSM), clique counting, pattern matching, an existence query for global clustering coefficient bound, and an existence query for k-sized clique. All the programs first express patterns by dynamically generating them or by loading them from external source. Then they invoke \sysname{} engine to find (\texttt{match()}) and process matches of those patterns. For every match for the pattern, user-defined function (e.g., \texttt{updateSupport()}, \texttt{countAndCheck()}, \texttt{found()}, etc.) gets invoked to perform desired analysis. The \texttt{count()} function is a syntactic sugar and is equivalent to \texttt{match()} with a function that increments a counter. Most of the programs are straightforward; we discuss FSM and existence queries in more detail.

\subsubsection{FSM: Anti-Monotonicity \& Label Discovery}
FSM leverages anti-monotonicity in support measures (discussed in ~\rsec{sec-graph-mining-overview}). \sysname{} natively provides MNI support computation where it internally constructs the \emph{domain} of patterns, i.e., a table mapping vertices in $G$ to those in $p$ (similar to \cite{scalemine}). After exploration ends for a single iteration, the support measure maintained by \sysname{} can be directly used to prune infrequent patterns using a threshold, as shown in \rfig{fsm_example}, and only the remaining frequent patterns are then programmatically extended to be explored. 

Before finding the first small frequent labeled patterns, the FSM program has no information about which labelings are frequent. \sysname{} provides dynamic label discovery by starting with unlabeled (or partially labeled) patterns as input and returning labeled matches. Hence, the FSM program in \rfig{fsm_example} first starts with unlabeled patterns of size 2, and discovers frequent labeled patterns. 
It then iteratively extends the frequent labeled patterns with unlabeled vertices to discover frequent labeled patterns of larger sizes.

\begin{figure}[t]
  \begin{lstlisting}[mathescape=True,numbers=none,basicstyle=\ttfamily\footnotesize]
ExplorationPlan generatePlan(Pattern p) {
  partialOrders = breakSymmetries(p);
  vc = minConnectedVertexCover(p);
  p$_C$ = vertexInducedSubgraph(vc, p);
  matchingOrders = computeMatchingOrders(p$_C$, partialOrders);
  return (p$_C$, partialOrders, matchingOrders);
}
  \end{lstlisting}
  \vspace{-0.15in}
  \caption{Computing exploration plan.}
  \label{alg-queryeval}
  \vspace{-0.1in}
\end{figure}

\subsubsection{Existence Queries}
\label{sec-existence-queries}
Existence queries allow quickly verifying whether certain structural properties hold within a given data graph. \sysname{} allows dynamically stopping exploration when the required conditions get satisfied.

\rfig{clustering_example} shows a \sysname{} program to verify if the global clustering coefficient~\cite{globalclustering} of graph $G$ is above a certain bound. The global clustering coefficient is the ratio of three times the number of triangles and the number of triplets (all connected subgraphs with three vertices, including duplicates) in $G$. The number of triplets is equal to twice the number of edge-induced 3-star matches since the endpoints of a 3-star are symmetric. Hence, the program quickly computes the number of 3-stars, and then starts counting triangles. During exploration, if the number of triangles reaches the requisite number to exceed the bound, exploration stops immediately. 

\rfig{exists_example} shows \sysname{} program to check whether a clique of a certain size is present in $G$. As soon as the exploration finds at least one match, it stops and returns \texttt{True}.

\section{Pattern-Aware Matching Engine}
\label{sec-pattern-aware-matching-engine}
\sysname{} is pattern-aware, and hence, it directly finds patterns in any given data graph. In this section, we discuss our core pattern matching engine that directly finds canonical subgraphs from a given vertex in the data graph. In \rsec{sec-sysname-patten-aware-mining}, we will use this engine to build \sysname{}. For simplicity, we assume the data graph and the pattern are unlabeled.

\subsection{Directly Matching A Given Pattern}
\label{sec-directly-matching}
To avoid the overheads of a straightforward exhaustive search, 
we develop our pattern matching solution based on well-established techniques~\cite{po,postponecart,dualsim}. Since patterns are much smaller than the data graph, we analyze the given pattern to develop an exploration plan. This plan guides the data graph exploration to ensure generated matches are unique.

\begin{figure}[t]
  \includegraphics[height=1.1in]{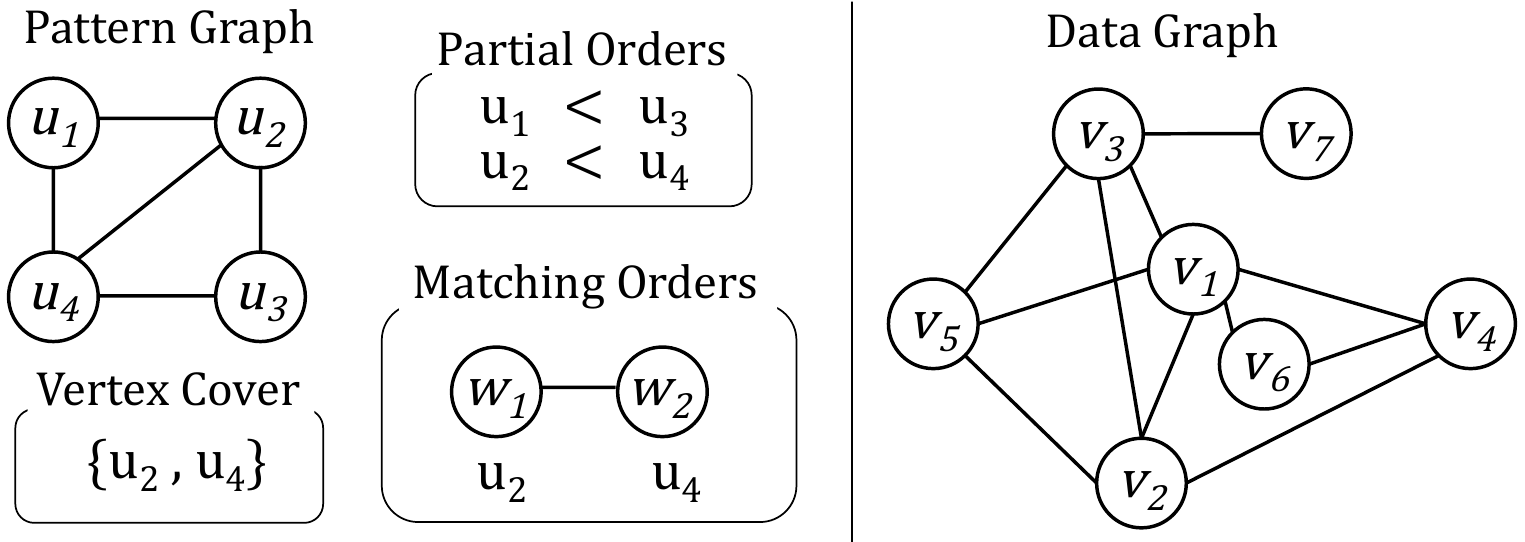}
  \vspace{-0.05in}
  \caption{Example of a pattern graph and a data graph.}
  \vspace{-0.1in}
  \label{fig-matching-example-1}
\end{figure}

\rfig{alg-queryeval} shows how the exploration plan is computed from a given pattern $p$. First, to avoid non-canonical matches we break the symmetries of $p$ by enforcing a partial ordering on matched vertices~\cite{po}.
This involves enumerating all automorphisms of $p$ to identify symmetries, and iteratively ordering pairs of symmetric vertices until the only automorphism satisfying the ordering is the one mapping each vertex to itself. For our example pattern in \rfig{fig-matching-example-1}, we obtain the partial ordering $u_1 < u_3$ and $u_2 < u_4$. 

In the next step, we compute the core of $p$ (called $p_C$) as the subgraph induced by its minimum connected vertex cover~\footnote{A connected vertex cover is a subset of connected vertices that covers all edges.}. Given a match $m$ for $p_C$, all matches of $p$ which contain $m$ can be computed from the adjacency lists of vertices in $m$.  
In our example, $p_C$ is the subgraph induced by $u_2$ and $u_4$.

To simplify the problem of matching $p_C$, we generate matching orders to direct our exploration in the data graph.
A matching order is a graph representing an ordered view of $p_C$. The vertices of the matching order are totally-ordered such that the partial ordering of $V(p)$ restricted to $V(p_C)$ is maintained. This allows matching $p_C$ by traversing vertices with increasing vertex ids without canonicality checks.

\begin{figure}[t]
  \lstinputlisting[mathescape=true,numbers=none,escapeinside={{*}{*}},basicstyle=\ttfamily\footnotesize,xleftmargin=0em]{code/processingmodel.txt}
  \vspace{-0.15in}
  \caption{Pattern-Aware Processing Model.}
  \label{alg-processingmodel}
  \vspace{-0.2in}
\end{figure}

We compute matching orders by enumerating all sequences of vertices in $p_C$ that meet the partial ordering, and for each sequence we create a copy of $p_C$ where the id of each vertex is remapped to its position in the sequence. Then, we discard duplicate matching orders. For our example pattern (\rfig{fig-matching-example-1}), its core substructure has only one valid vertex sequence, $\{u_2, u_4\}$, so we obtain only one matching order. Note that there can be multiple matching orders for a given $p_C$ depending on the partial orders. We call the $i^{\text{th}}$ matching order $p_{Mi}$.

Thus, to match $p_C$ it suffices to match its matching orders $p_{Mi}$. A match for $p_{Mi}$ results in 1 match for $p_C$ per valid vertex sequence. In our example, a match for $p_{M1}$, say $\{v_2, v_3\}$, is converted to a single match for $p_C$, $v_2 \to w_1 \to u_1, v_3 \to w_2 \to u_2$.

It is important to note that the exploration plan is generated by analyzing the pattern graph only, i.e., all the computations explained above are applied on $p$ (and its derivatives). Hence, exploration plans are computed quickly (often in less than half a millisecond).

\vspace{-0.05in}
\subsection{Matching Anti-Edges}
\label{sec-matching-anti-edges}
To enforce an anti-edge constraint, we perform a set difference between the adjacency lists of its endpoints. For example, if $v_i, v_j$ match $u_1, u_2$ of $p_a$ in~\rfig{fig-not-examples}, the candidates for $u_4$ are the elements of $\adj{v_i} \setminus \adj{v_j}$.

To perform the set difference, we need to ensure that one of the vertices of the anti-edge is already matched so that its adjacency list is available. Hence, when computing the vertex cover we also cover the anti-edge by including one of its endpoints. When computing partial orders, however, we do not need to consider anti-edges since they don't generate automorphic matches and we only verify absence of edges (i.e., we never traverse through anti-edges).

\subsection{Matching Anti-Vertices}
\label{sec-matching-anti-vertices}
The anti-vertex semantics offer flexibility to match them differently compared to anti-edges. Anti-vertices break symmetries and do not impact the core graph (i.e., vertex cover computation) as described next.

\begin{figure}[t]
  \includegraphics[height=0.95in]{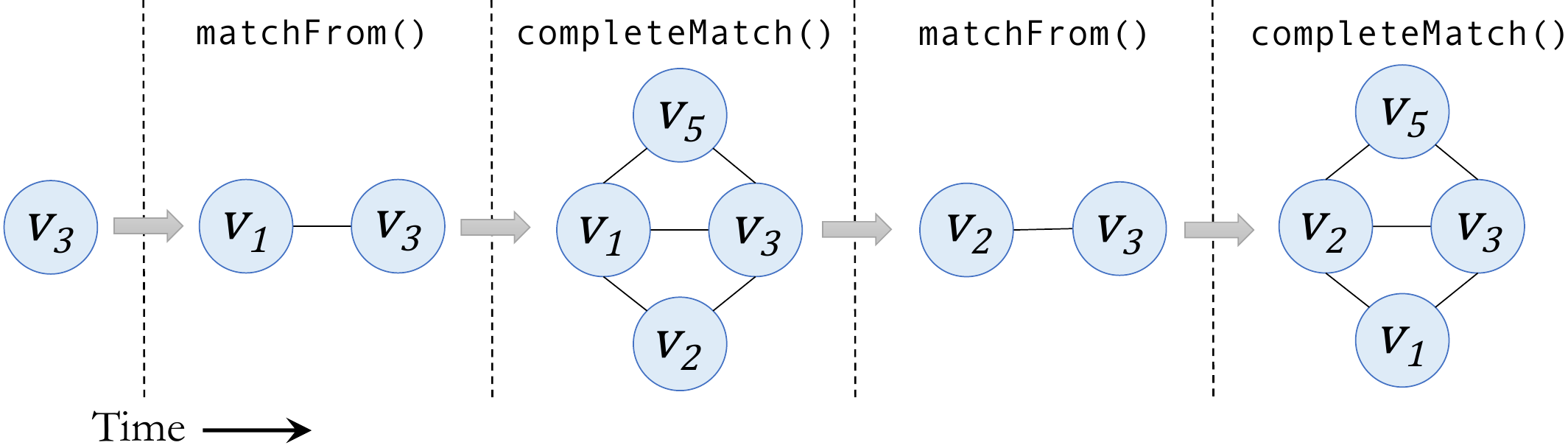}
  \caption{Pattern-guided exploration in \sysname{} for pattern and data graph in \rfig{fig-matching-example-1} with matching order high-to-low.}
  \label{fig-matching-example-2}
  \vspace{-0.1in}
\end{figure}

\paragraph{Checking Anti-Vertices.}
The anti-vertex constraint can only be verified after the common neighbors of an anti-vertex's neighbor have been matched. Thus, we perform the check after all true vertices are already matched. 

For example, consider $p_e$ in \rfig{fig-not-examples}, with anti-vertex $u_4$. If $v_i$, $v_j$, $v_k$ in the data graph match $u_1$, $u_2$, $u_3$ respectively, then we verify the anti-vertex constraint for $u_4$ as follows: \[(\adj{v_i}\setminus \{v_j\}) \cap (\adj{v_k}\setminus \{v_j\}) = \varnothing\]
Since anti-vertices do not participate while matching regular vertices and edges, we keep the pattern core $p_C$ the same as the core pattern when anti-vertices are removed.

\paragraph{Breaking Symmetries with Anti-Vertices.}
Anti-vertices introduce asymmetries in the pattern. We extract these asymmetries to ensure that valid matches do not get incorrectly pruned. We first explain how a partial ordering generated without considering anti-vertices would end up pruning valid matches, and then we showcase how anti-vertices get used to generate the correct partial ordering.

Let $p'$ be the sub-graph of $p_e$ (in \rfig{fig-not-examples}) with the anti-vertex removed. Notice that $p'$ is simply a triangle. In our data graph from \rfig{fig-matching-example-1}, the vertices $v_1, v_4, v_6$ form a triangle, and the pairs $\langle v_4, v_6\rangle$ and $\langle v_1, v_6\rangle$ have no common neighbors outside of the triangle. However, $\langle v_1, v_4\rangle$ have $v_2$ as a common neighbor. Thus, according to anti-vertex semantics, $p_e$ should match the subgraphs induced by $\{v_1, v_4, v_6\}$ and $\{v_1, v_6, v_4\}$. But since $p'$ does not include the anti-vertex $u_4$, it is fully symmetric and hence would only match $\{v_1, v_4, v_6\}$. Therefore, we cannot ignore anti-vertices in the symmetry-breaking algorithm: otherwise the resulting partial ordering would prune matches that are valid according to the anti-vertex constraint.

To generate a correct partial ordering, we expose the asymmetries introduced by anti-vertices to the symmetry-breaking algorithm, which treats the anti-edges of an anti-vertex differently than regular edges when computing automorphisms. Considering $p_e$ again, our symmetry-breaking algorithm finds that $u_2$ is not symmetric with $u_1$ and $u_3$ since it is not connected to the anti-vertex $u_4$. Meanwhile, $u_2$ and $u_4$ will not be considered symmetric either, because $u_2$ is connected to $u_1$ and $u_3$ with true edges whereas $u_4$ is connected with anti-edges.

\vspace{-0.05in}
\section{\sysname{}: Pattern-Aware Mining}
\label{sec-sysname-patten-aware-mining}
We will now discuss how \sysname{} performs pattern-aware mining using the matching engine presented in \rsec{sec-pattern-aware-matching-engine}. 

\vspace{-0.05in}
\subsection{Pattern-Aware Processing Model}
Mining in \sysname{} is achieved by matching patterns starting from each vertex and invoking the user function to process those patterns. Hence, a task in \sysname{} is defined as the data vertex where the matching process begins. As shown in \rfig{alg-processingmodel}, each mining task takes a start vertex and the exploration plan generated in \rsec{sec-pattern-aware-matching-engine} (matching orders, partial orders, pattern core $p_C$). From the starting vertex, we recursively match vertices in the matching order. At each recursion level, a data vertex is matched to a matching order vertex. To avoid non-canonical matches, we maintain sorted adjacency lists and use binary search to generate candidate sets comprised only of vertices that meet the total ordering.

Once a matching order is fully matched, it is converted to matches for $p_C$. Matches for $p_C$ are then completed by performing set intersections (for true edges) and set differences (for anti-edges) on sections of adjacency lists that satisfy the partial orders. Each completed match is passed to a user-defined callback for further processing. \rfig{fig-matching-example-2} shows a complete exploration example.

Note that our processing model doesn't incur expensive isomorphism and canonicality checks for every match in the data graph, while simultaneously avoiding mis-matches and only exploring subgraphs that match the given pattern. Furthermore, tasks in our processing model are independent of each other since explorations starting from two different vertices do not require any coordination. Threads dynamically pick up new tasks when they finish their current ones.

\subsection{Early Pruning for Dynamic Load Balancing} 
\label{sec-early-pruning-via-degree-aware-matching}
While a matching order enforces a total ordering on the data vertices matching $p_C$, there is flexibility in the order in which its vertices are matched.
To reduce the load imbalance across our matching tasks, we: (a) follow matching orders high-to-low, e.g. in our example in \rfig{fig-matching-example-1} we match $w_2$ before $w_1$; and, (b) order vertices by their degree such that $v_i < v_j$ in the data graph if and only if $degree(v_i) \leq degree(v_j)$.

High-degree vertices have fewer neighbors with degrees higher than or equal to their own, so the degree-based ordering ensures that when a high-degree vertex is matched to $w_2$, only those few neighbors can be matched to $w_1$. Thus, explorations of neighbors with lower degrees are pruned. Note that the total number of matches generated remains the same; the high-to-low matching order traversal, along with degree-based vertex ordering, reduces the workload imbalance of matching across high-degree and low-degree vertices by dynamically pruning more explorations from high-degree tasks while enabling those explorations in low-degree tasks.

Finally, it is important to note that this process does not `eliminate' workload imbalance simply because the mining workload is dynamic and depends on the pattern and data graphs. Hence, to avoid stragglers and maximize parallelism, we process tasks in the order defined by the degree of the starting vertex, beginning with the highest-degree vertices.

\vspace{-0.02in}
\subsection{Early Termination for Existence Queries}
For existence queries, \sysname{} allows actively monitoring the required conditions so that the exploration process terminates as quickly as possible. When the matching thread observes the required conditions, the user function calls \texttt{stopExploration()} to notify other matching threads. Threads monitor their notifications periodically while matching, and when a notification is observed, the thread-local values computed up to that point are aggregated and returned to the user.

\vspace{-0.02in}
\subsection{On-the-fly Aggregation}
\sysname{} performs on-the-fly aggregation to provide global updates as mining progresses. This is useful for early termination and for use cases like FSM where patterns that meet the support threshold can be deemed frequent while matching continues for other patterns. 

We achieve this using an asynchronous aggregator thread that periodically performs aggregation as values arrive from threads. The matching threads swap the global aggregation value with the local aggregation value and set a flag to indicate that new thread-local aggregation values are available for aggregation. The aggregator thread blocks until all thread-local aggregation values become available, after which it performs the aggregation and resets the flag to indicate that the global aggregation value is available. With this design, our matching threads remain non-blocking to retain high matching throughput.

\subsection{Implementation Details}
\sysname{} is implemented in C++ where concurrent threads operate on exploration tasks, each starting at a different vertex in the data graph. The data graph is represented using adjacency lists, and the tasks are distributed dynamically using a shared atomic counter indicating the next vertex to be processed. To minimize coordination, threads maintain information regarding their exploration tasks, including candidate sets for each pattern vertex as exploration proceeds.

\sysname{} provides native computation of support values for frequency-based mining tasks like FSM. Domains are implemented as a vector of bitmaps representing the data vertices that can be mapped to each pattern vertex. They are aggregated by merging their contents via logical-or. To scale to large datasets, we use compressed Roaring bitmaps~\cite{roaring}, which are more memory efficient than dense bitmaps.

\section{Evaluation}
We evaluate the performance of \sysname{} on a wide variety of graph mining applications and compare the results with the state-of-the-art general purpose graph mining systems~\footnote{We could not evaluate AutoMine~\cite{automine} since its source code is not available.}: Fractal~\cite{fractal}, Arabesque~\cite{arabesque}, RStream~\cite{rstream} and G-Miner~\cite{gminer}.

\begin{table}[t]
\small
  \begin{tabular}{l r r c r r}
    \multicolumn{1}{c}{\multirow{2}{*}{$G$}} & \multirow{2}{*}{$|V(G)|$} & \multirow{2}{*}{$|E(G)|$} & \multirow{2}{*}{$|L(G)|$} & Max. & Avg.  \\
     &  &  &  & Deg. & Deg. \\
    \midrule
(MI) Mico~\cite{mico}       & 100K    & 1M     & 29  & 96637 & 21.6 \\[0.05in]
(PA) Patents~\cite{patents} &         &        &     &       &      \\
    \hspace{6mm} $^{|}$\hspace{-0.5mm}--- \ Unlabeled & 3.7M    & 16M    & --- & 793   & 10   \\
    \hspace{6mm} $^{|}$\hspace{-0.5mm}--- \ Labeled   & 2.7M    & 13M    & 37  & 789   & 10   \\[0.05in]
(OK) Orkut~\cite{snap}      & 3M      & 117M   & --- & 33133 & 76   \\
(FR) Friendster~\cite{snap} & 65M     & 1.8B   & --- & 5214  & 55   \\
    \midrule
  \end{tabular}
  \vspace{0.1in}
  \caption{Real-world graphs used in evaluation. \\ '---' indicates unlabeled graph.}
  \label{datasets}
  \vspace{-0.3in}
\end{table}

\subsection{Experimental Setup}
\paragraph{System.}
All experiments were conducted on \texttt{c5.4xlarge} and \texttt{c5.metal} Amazon EC2 instances. Most experiments use \texttt{c5.4xlarge}, with an Intel Xeon Platinum 8124M CPU containing 8 physical cores (16 logical cores with hyper-threading), 32GB RAM, and 30GB SSD. Fractal (FCL), Arabesque (ABQ) and G-Miner (GM) were evaluated using both a cluster of 8 nodes (denoted by the suffix `-8'), as well as in single node configuration (denoted by the suffix `-1').

RStream was evaluated on a \texttt{c5.4xlarge} (RS-16) as well as a \texttt{c5.metal} (RS-96) equipped with an Intel Xeon Scalable Processor containing 48 physical cores (96 logical cores with hyper-threading), and 192GB RAM. Both instances were provisioned with a 500GB SSD.

In all performance comparisons, we ran \sysname{} on a \texttt{c5.4xlarge}, and we used \texttt{c5.metal} to study \sysname{}'s scalability and resource utilization.

\paragraph{Datasets.}
\rtab{datasets} lists the data graphs used in our evaluation. Mico (MI) is a co-authorship graph labeled with each author's research field. Patents (PA) is a patent citation graph. In the labeled version, each patent is labeled with the year it was granted. Orkut (OK) and Friendster (FR) are unlabeled social network graphs where edges represent friendships between users. Mico and labeled Patents have been used by previous systems~\cite{arabesque, fractal, rstream} to evaluate FSM while Orkut and Friendster were used by~\cite{gminer}. Except for FSM and labeled pattern matching, all experiments on Patents use its larger, unlabeled version.

\begin{figure}
  \includegraphics[height=0.67in]{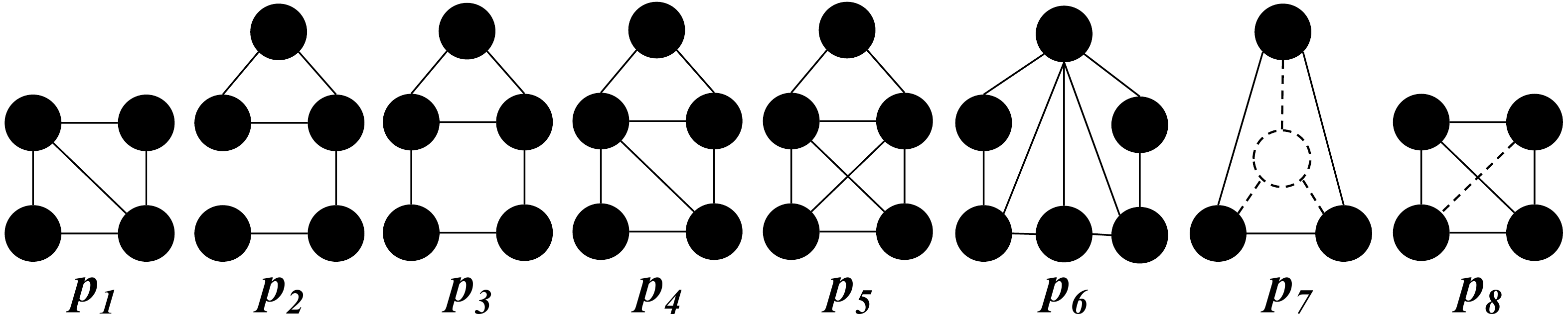}
  \caption{Patterns used in evaluation.}
  \label{fig-query-patterns}
  \vspace{-0.1in}
\end{figure}

\paragraph{Applications.}
We evaluated \sysname{} on a wide array of applications: counting motifs with 3 and 4 vertices, labeled 3- and 4-motifs; counting $k$-cliques, for $k$ ranging from 3 to 5; FSM with patterns of 3 edges on labeled datasets using various supports; matching the patterns shown in~\rfig{fig-query-patterns}; and checking the existence of 14-cliques. We selected the patterns in \rfig{fig-query-patterns} to cover all the patterns used in \cite{fractal} and \cite{gminer}; note that patterns like triangles and empty squares are covered via applications like cliques and motifs.
Since G-Miner's pattern matching program is specific to labeled $p_2$ (in \rfig{fig-query-patterns}), we used labels on $p_2$ for all the systems to enable direct comparison. 
To match it on Orkut and Friendster graphs, which are unlabeled, we added synthetic labels (integers 1-6 as done in \cite{gminer}) with uniform probability.

\begin{table}
\small
\setlength{\tabcolsep}{0.8mm}
\renewcommand{\arraystretch}{.75}
  \begin{tabular}{c c | r | r r | r r}
  \multirow{2}{*}{App} & \multirow{2}{*}{$G$} & \multirow{2}{*}{\sysname{}}  & \multicolumn{2}{c|}{Arabesque} & \multicolumn{2}{c}{RStream} \\[0.02in]
   &  &   & ABQ-8 & ABQ-1 & RS-96 & RS-16 \\
    \midrule
  3-Motifs & MI  & \textbf{0.12}    & 158.05    & 39.05  & 51.83    & 252.74   \\
           & PA  & \textbf{3.10}    & 870.70    & 525.49 & 2685.45  & 2186.93  \\
           & OK  & \textbf{17.90}   & ---       &  ---   & /        & /        \\
           & FR  & \textbf{370.64}  & ---       &  ---   & /        & /        \\
    \midrule
  4-Motifs & MI  & \textbf{6.74}    & ---       &  ---   & /        & /        \\
           & PA  & \textbf{12.04}   & ---       &  ---   & /        & $\times$ \\
           & OK  & \textbf{6156.10} & ---       &  ---   & /        & /        \\
    \midrule
  2K-FSM   & MI  & \textbf{380.81}  & 3418.25   & 821.60 & $\times$ & ---      \\
  3K-FSM   & MI  & \textbf{279.74}  & 3520.82   & 784.27 & $\times$ & ---      \\
  4K-FSM   & MI  & \textbf{250.68}  & 3514.97   & 779.75 & $\times$ & ---      \\
    \midrule
  20K-FSM  & PA  & \textbf{859.41}  & ---             & ---    & 1757.69 & ---      \\
  21K-FSM  & PA  & \textbf{647.97}  & ---             & ---    & 1711.87 & ---      \\
  22K-FSM  & PA  & 507.56           & \textbf{342.63} & ---    & 1626.53 & ---      \\
  23K-FSM  & PA  & 402.57           & \textbf{299.12} & ---    & 1936.92 & ---      \\
    \midrule
  3-Cliques & MI & \textbf{0.05}    & 18.62     & 5.98   & 7.34     & 11.32    \\
            & PA & \textbf{0.59}    & 155.55    & 87.26  & 8.40     & 11.97    \\
            & OK & \textbf{13.75}   & ---       &  ---   & 986.20   & 1643.10  \\
            & FR & \textbf{296.99}  & ---       &  ---   & /        & /        \\
    \midrule
  4-Cliques & MI & \textbf{2.02}    & 1598.09   & 353.37 & 266.61   & ---      \\
            & PA & \textbf{0.90}    & 249.38    & 107.02 & 105.00   & 181.30   \\
            & OK & \textbf{281.47}  & ---       &  ---   & /        & /        \\
            & FR & \textbf{1337.77} & ---       &  ---   & /        & /        \\
    \midrule
  5-Cliques & MI & \textbf{89.60}   & $\times$  &  ---   & ---      & ---      \\
            & PA & \textbf{1.12}    & 352.64    & 122.09 & 145.00   & 237.90   \\
            & OK & \textbf{3182.56} & ---       &  ---   & /        & /        \\
            & FR & \textbf{4214.72} & ---       &  ---   & /        & /        \\
    \midrule
\end{tabular}
\caption{Execution times (in seconds) for \sysname{}, Arabesque~\cite{arabesque} and RStream~\cite{rstream}. \\ '$\times$' indicates the execution did not finish within 5 hours. \\ '---' indicates the system ran out of memory. \\ '/' indicates the system ran out of disk space.}
\label{bfs-comparison}
\end{table}

\begin{table}
\vspace{0.2in}
\renewcommand{\arraystretch}{.75}
\begin{tabular}{c c | r | r r}
  \multirow{2}{*}{App} & \multirow{2}{*}{$G$} & \multirow{2}{*}{\sysname{}}  & \multicolumn{2}{c}{Fractal}  \\[0.02in]
   & &  & FCL-8 & FCL-1 \\
    \midrule
  3-Motifs & MI    & \textbf{0.12}    & 22.13    &  17.11   \\
           & PA    & \textbf{3.10}    & 231.95   & 214.34   \\
           & OK    & \textbf{17.90}   & ---      &  ---     \\
           & FR    & \textbf{370.64}  & ---      &  ---     \\
    \midrule
  4-Motifs & MI    & \textbf{6.74}    & 78.66    & 420.67   \\
           & PA    & \textbf{12.04}   & 362.19   & 742.35   \\
           & OK    & \textbf{6156.10} & ---      &  ---     \\
    \midrule
  2K-FSM   & MI    & 380.81           & \textbf{154.47} & 675.98 \\
  3K-FSM   & MI    & 279.74           & \textbf{154.74} & 680.33 \\
  4K-FSM   & MI    & 250.68           & \textbf{144.34} & 663.26 \\
    \midrule
  20K-FSM  & PA    & \textbf{851.41}  & $\times$        & --- \\
  21K-FSM  & PA    & \textbf{647.97}  & $\times$        & --- \\
  22K-FSM  & PA    & \textbf{507.56}  & $\times$        & --- \\
  23K-FSM  & PA    & \textbf{402.57}  & 451.18          & --- \\
    \midrule
  3-Cliques & MI   & \textbf{0.05}    & 18.71    &  17.21   \\
            & PA   & \textbf{0.59}    & 232.60   & 216.76   \\
            & OK   & \textbf{13.75}   & ---      &  ---     \\
            & FR   & \textbf{296.99}  & ---      &  ---     \\
    \midrule
  4-Cliques & MI   & \textbf{2.02}     & 25.77   &  34.79  \\
            & PA   & \textbf{0.90}     & 237.64  & 224.50  \\
            & OK   & \textbf{281.47}   & ---     &  ---     \\
            & FR   & \textbf{1337.77}  & ---     &  ---     \\
    \midrule
  5-Cliques & MI   & \textbf{89.60}    & 181.30  &  904.65  \\
            & PA   & \textbf{1.12}     & 266.88  &  217.30  \\
            & OK   & \textbf{3182.56}  & ---     &  ---     \\
            & FR   & \textbf{4214.72}  & ---     &  ---     \\
    \midrule
  Match $p_1$ & MI & \textbf{0.12}   & 24.76  & 36.02  \\
              & PA & \textbf{0.84}   & 235.72 & 189.03 \\
              & OK & \textbf{38.97}  & ---    & ---    \\
              & FR & \textbf{824.62} & ---    & ---    \\
    \midrule
  Match $p_2$ & MI & \textbf{0.03}   & 22.11  & 16.85  \\
              & PA & \textbf{1.07}   & 260.15 & 202.23 \\
              & OK & \textbf{474.09} & ---    & ---    \\
              & FR & \textbf{18.09}  & ---    & ---    \\
    \midrule
  Match $p_3$ & MI & \textbf{19.93}   & 181.76 & 1288.94 \\
              & PA & \textbf{13.41}   & 30.18  & 69.33   \\
              & OK & \textbf{13292.77}& ---    & ---     \\
    \midrule
  Match $p_4$ & MI & \textbf{12.29}   & 120.99 & 789.81 \\
              & PA & \textbf{2.23}    & 25.58  & 21.63  \\
              & OK & \textbf{1569.73} & ---    & ---    \\
              & FR & \textbf{7057.40} & ---    & ---    \\
    \midrule
  Match $p_5$ & MI & \textbf{14.94}   & 56.51  & 345.35 \\
              & PA & \textbf{1.89}    & 25.30  & 17.39  \\
              & OK & \textbf{1381.03} & ---    & ---    \\
              & FR & \textbf{6726.51} & ---    & ---    \\
    \midrule
  Match $p_6$ & MI & \textbf{65.26}   & $\times$ & $\times$ \\
              & PA & \textbf{27.94}   & 210.04   & 205.39   \\ \hline
\end{tabular}
\vspace{0.07in}
\caption{Execution times (in seconds) for \sysname{} and Fractal~\cite{fractal}. '---' indicates the system ran out of memory. \\ '$\times$' indicates the execution did not finish within 5 hours.}\label{fractal-comparison}
\end{table}

\subsection{Comparison with Breadth-First Enumeration}
\rtab{bfs-comparison} compares \sysname{}'s performance with Arabesque and RStream on motif-counting, clique-counting, and FSM (these systems do not support pattern matching). As we can see, \sysname{} outperforms the breadth-first systems by at least an order of magnitude on every application except FSM. RStream, despite being an out-of-core system, runs out of memory during FSM computations because of the massive amounts of aggregation information, as well as during 4- and 5-cliques on Mico where it could not handle the size of a single expansion step. 

It was interesting to observe that Arabesque performed better in single-node mode compared to 8-node configuration across all experiments except FSM on Patents, where it ran out of memory. This is because its breadth-first exploration generates large amounts of partial matches which must be synchronized across the entire cluster between supersteps, incurring high communication costs that impact its scalability. 

When support thresholds are high, Arabesque on 8 nodes computes FSM more quickly than \sysname{}. This is because its breadth-first strategy leverages high parallelism when there are few frequent patterns to explore and aggregate. 
However, this approach is sensitive to the support threshold, which stops Arabesque from scaling to lower threshold values where there are more frequent patterns. In these scenarios Arabesque simply fails due to the memory burden of maintaining the vast amount of intermediate matches and aggregation values. We suspect that even with more main memory per node, the intermediate computations (canonicality, isomorphism, etc.) for each individual match in Arabesque would significantly limit its performance. Since \sysname{} is pattern-aware, it only needs to maintain aggregation values for the patterns it is currently matching, allowing it to scale to inputs that yield many frequent patterns.

\newpage

\subsection{Comparison with Depth-First Enumeration}
\rtab{fractal-comparison} compares \sysname{}'s performance with Fractal on motif-counting, clique-counting, FSM, and pattern matching. As we can see, \sysname{} is faster than Fractal by at least an order of magnitude across most of the applications. For instance, 4-cliques on Patents finished in less than a second on \sysname{} whereas Fractal took over 200 seconds in both cluster and single-node configurations. 

Given equal resources (i.e., on a single node), FSM on Mico is up to 2.6$\times$ faster on \sysname{} compared to that on Fractal. Furthermore, \sysname{} scales to the larger dataset while Fractal does not. Even with 8 nodes, Fractal only outperforms \sysname{} on the small Mico graph, and cannot handle the Patents workload except for very high support thresholds, where there is less work to be done; there too, \sysname{} executes faster than Fractal.

Similar to Arabesque, Fractal's pattern-unawareness requires it to maintain global aggregation values throughout its computation. In FSM, the aggregation values consume $O(|V|)$ memory per vertex in each pattern in the worst case, and thus quickly become a scalability bottleneck. On the other hand, \sysname{} only needs to maintain aggregation values for the current patterns being matched, which allows it to achieve comparable performance and superior scalability while matching up to 15,817 patterns on Mico and 6,739 patterns on Patents.

\begin{table}
\begin{tabular}{c c | r | r r}
  \multirow{2}{*}{App} & \multirow{2}{*}{$G$} & \multirow{2}{*}{\sysname{}}  & \multicolumn{2}{c}{G-Miner}  \\[0.02in]
   &  &   & GM-8 & GM-1 \\
    \midrule
  3-Cliques & MI & \textbf{0.05}    &  3.79    & 3.86   \\
            & PA & \textbf{0.59}    &  7.91    & 8.93   \\
            & OK & \textbf{13.75}   &  44.26   & 62.65  \\
            & FR & \textbf{296.99}  &  /       &  /     \\
    \midrule
  Match $p_2$ & MI & \textbf{0.03}   & 3.67            & 3.95   \\
              & PA & \textbf{1.07}   & 6.84            & 9.80   \\
              & OK & 474.09          & \textbf{145.00} & 396.72 \\
              & FR & \textbf{18.09}  & /               & /      \\
    \midrule
\end{tabular}
\caption{Execution times (in seconds) for \sysname{} and G-Miner~\cite{gminer}. '/' indicates the system ran out of disk space.}\label{gminer-comparison}
\vspace{-0.3in}
\end{table}

\subsection{Comparison with Purpose-Built Algorithms}
\label{sec-comparison-purpose-built}
G-Miner is a general-purpose subgraph-centric system that targets expert users to implement the mining algorithms using a low-level subgraph data structure. Since expressing common mining algorithms requires domain expertise, we only evaluated the applications that are already implemented in G-Miner: 3-clique counting and pattern matching on $p_2$ (pattern matching for other patterns is not supported). This experiment serves to showcase how \sysname{} compares to custom algorithms for matching specific patterns.

\rtab{gminer-comparison} compares \sysname{}'s performance with G-Miner. As we can see, \sysname{} is 3$\times$ to 77$\times$ faster than G-Miner when counting 3-cliques even though G-Miner implements an algorithm designed specifically to count 3-cliques. When matching $p_2$, \sysname{} is 6$\times$ to 131$\times$ faster on Mico and Patents. On Orkut, however, G-Miner performs better on finding $p_2$; this is because G-Miner indexes vertices by labels when preprocessing the data graph, whereas \sysname{} discovers labels dynamically. Due to these indexes, G-Miner could not handle Friendster even with 240GB disk space on the cluster.

\begin{table}[t]
  \begin{tabular}{c | r | r | r}
    \multirow{2}{*}{$G$} & Existence & Anti-Vertex & Anti-Edge \\
     & 14-Clique & \multicolumn{1}{c |}{$p_7$} & \multicolumn{1}{c}{$p_8$} \\
    \midrule
     MI & 0.07  & 0.65   & 6.92    \\
     PA & 3.95  & 0.67   & 1.69    \\
     OK & 4.08  & 56.06  & 879.01  \\
     FR & 50.39 & 470.21 & 4017.15 \\
    \midrule
  \end{tabular}
  \caption{\sysname{} execution times (in seconds) for matching with an anti-vertex ($p_7$), matching with an anti-edge ($p_8$), and 14-clique existence query.}
  \label{usonly}
  \vspace{-0.25in}
\end{table}

\subsection{Mining with Constraints in \sysname{}}
We evaluate \sysname{} on mining tasks with structural constraints. We match a pattern containing an anti-vertex ($p_7$), one containing an anti-edge ($p_8$), and perform an existence query of a 14-clique. The results are show in~\rtab{usonly}.

\paragraph{Mining with Anti-Vertices.}
Pattern $p_7$ expresses a maximal clique of size 3 (triangle) using a fully-connected anti-vertex, i.e., it matches all triangles that are not contained in a 4-clique. While satisfying the anti-vertex constraint requires computing set-intersections across all vertices of the triangle, \sysname{} takes less than a minute on Orkut, and under eight minutes on the billion scale Friendster graph.

\paragraph{Mining with Anti-Edges.}
Pattern $p_8$ represents a vertex-induced chordal square using an anti-edge constraint. Satisfying the anti-edge constraint is computationally demanding, since it requires computing set differences of adjacency lists, which is twice as many operations as the sum of the adjacency list sizes. Nevertheless, \sysname{} still easily completes it on all the datasets.

\paragraph{Existence Query.}
The goal of this query is to determine whether a 14-clique exists in the data graph. \sysname{} stops exploration immediately after finding an instance of 14-clique. We observe that Patents and Orkut performed similarly; this is because the rarer the target pattern for an existence query, the longer it takes to find it. Patents does not contain a 14-clique, so the entire graph was searched, but in the much larger and denser Orkut graph, a 14-clique gets found quickly during exploration. Friendster is both large and sparse, and hence, 14-cliques are rare. Furthermore, since 14-clique is a large pattern, several partial explorations do not lead to a complete 14-clique.

\subsection{\sysname{}'s Pattern-Aware Runtime}
\paragraph{Benefits of Symmetry Breaking}
Symmetry breaking is a well-studied technique for subgraph matching that \sysname{} uses to guide its graph exploration. However, recent systems like Fractal~\cite{fractal} and AutoMine~\cite{automine} are not fully pattern-aware and do not leverage symmetry breaking for common graph mining use cases. We showcase the importance of symmetry breaking in \sysname{} by disabling it and running 4-motifs and FSM with low support thresholds. These are expensive subgraph matching workloads: 4-motifs contains complex patterns with many matches and FSM involves a large number of patterns to match. \rfig{nosymm-comparison} summarizes the results.

We observe that symmetry breaking improves performance by an order of magnitude for 4-motifs on Mico and Patents. Orkut 4-motifs without symmetry breaking did not even finish matching even a single size 4 pattern within 5 hours. This shows the importance of symmetry breaking when scaling to large patterns and large datasets. For instance, Orkut contains over 22 trillion \emph{unique} vertex-induced 4-stars, and so without symmetry breaking, the system must process six times that many matches (a 4-star's automorphisms are the permutations of its 3 endpoints: resulting in $3! = 6$ automorphic subgraphs).

FSM achieves ~3$\times$ performance improvement through symmetry breaking. This is because with symmetry breaking, FSM's expensive aggregation values are only written to once per unique match in the data graph, whereas the naive approach without symmetry breaking would incur dozens of redundant write (and read) accesses per unique match.

\begin{figure}
  \includegraphics[width=\textwidth/2]{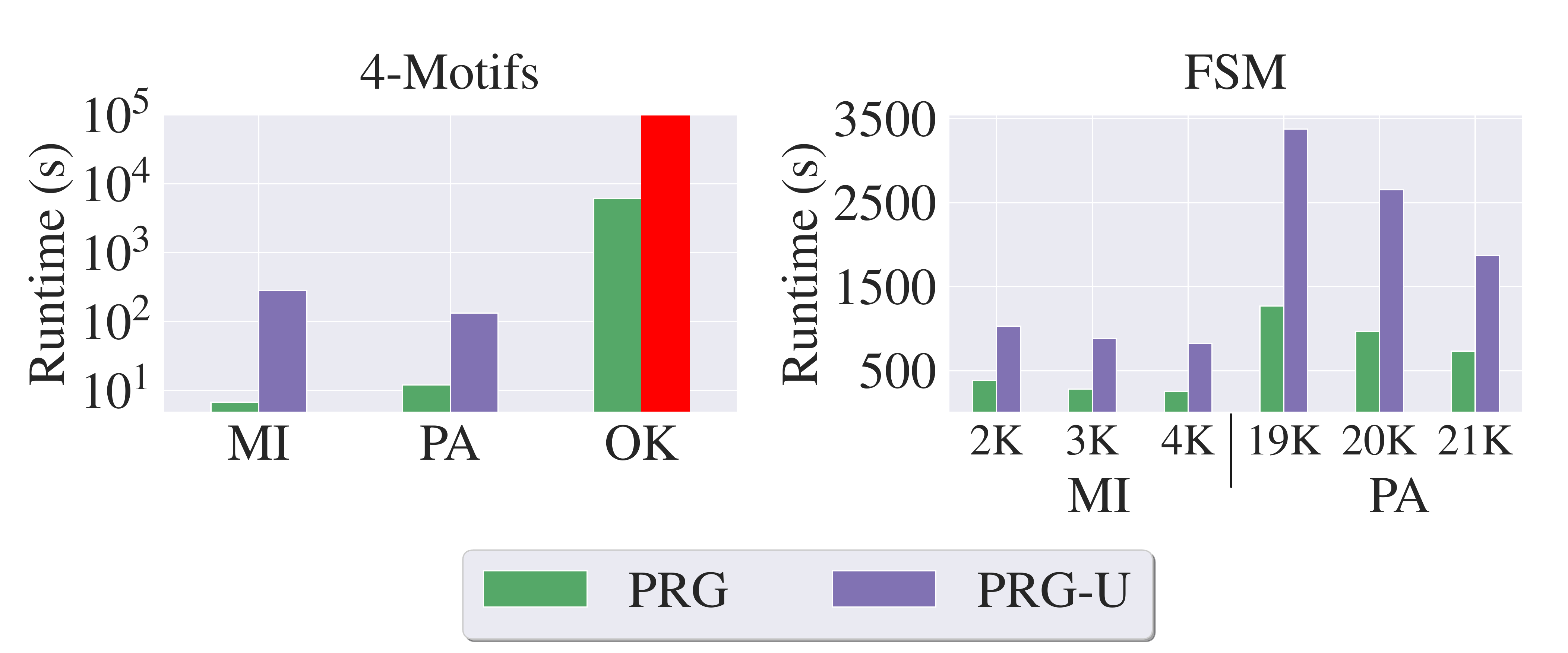}
  \caption{Execution times (in seconds) for \sysname{} with (PRG) and without (PRG-U) symmetry breaking. PRG-U could not finish matching any of the 4-motif patterns on Orkut within 5 hours.}
  \label{nosymm-comparison}
\end{figure}

\begin{figure}[t]
  \includegraphics[width=\textwidth/2]{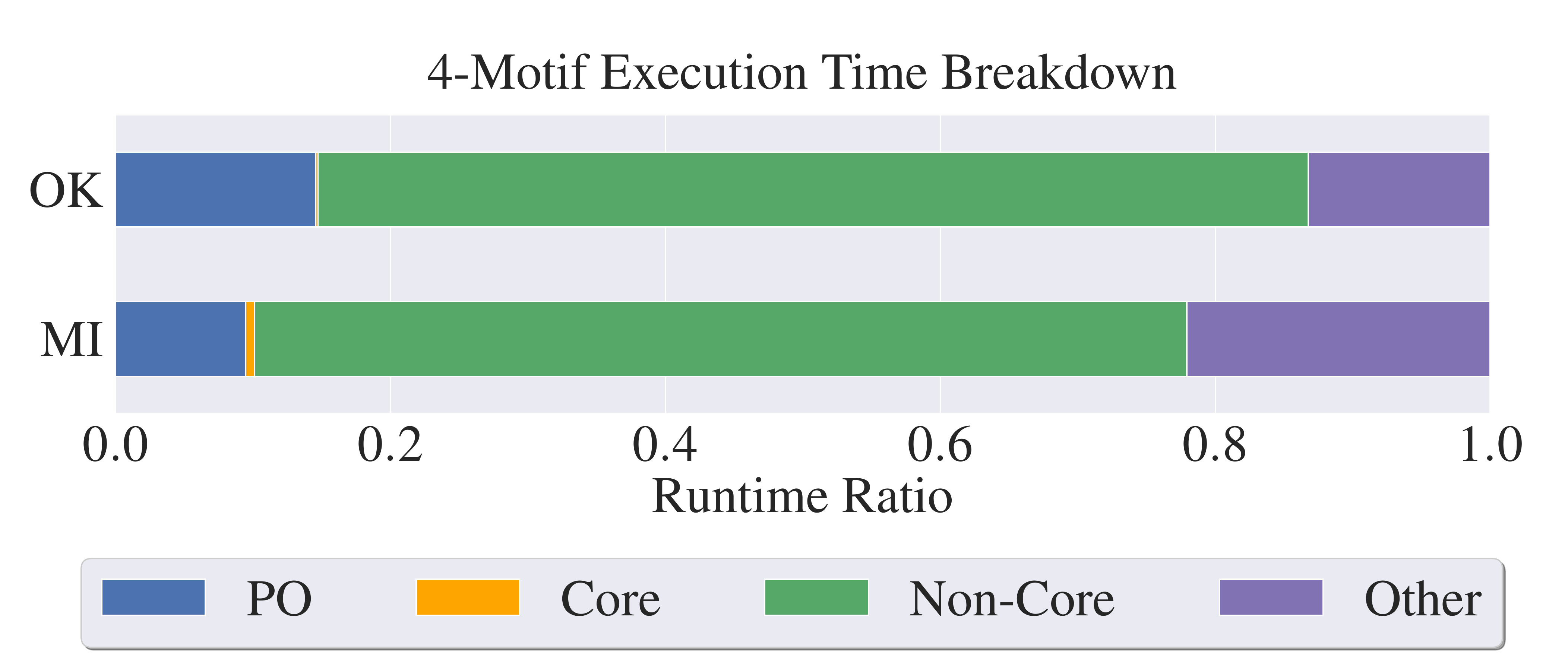}
  \caption{\sysname{} 4-motif execution time breakdown.}
  \label{plot-ratios}
\end{figure}

\paragraph{Breakdown on Mining Time.} 
\rfig{plot-ratios} shows the ratio of time spent in each stage of matching during 4-motif execution: finding the range of sorted candidate sets that meet the pattern's partial order (PO), performing adjacency list intersections and differences to match the pattern core (Core) and finally, intersecting the adjacency lists of the pattern core to complete the match (Non-Core). Some time is also spent on the other requirements of matching, for example, fetching adjacency lists and mapping vertices (Other).

We observe that the majority of execution is spent intersecting adjacency lists of candidate vertices to complete matches. In comparison to the overall execution time, matching the core pattern is insignificant. This is because the core pattern is matched according to all valid total orderings of its vertices, and hence, the traversal is fully guided. In contrast, the non-core vertices may or may not be ordered with respect to each other, and with respect to the core vertices; so the runtime usually has less guidance when exploring the graph. Furthermore, in most patterns the core is small and involves fewer intersections than the non-core component.

\subsection{System Characteristics}
\paragraph{Scalability.} We study \sysname{}'s scalability by matching pattern $p_1$ on Orkut using \texttt{c5.metal} instance. Note that we do not perform a COST analysis~\cite{costperformance} with this experiment since we already compared \sysname{} with optimized algorithms in \rsec{sec-comparison-purpose-built}, and state-of-the-art serial pattern matching solutions like~\cite{vf2, turboiso} performed much slower than our single threaded execution. 

\rfig{scalability-plots} shows how \sysname{} scales as number of threads increase from 1 to 96. As we can see, \sysname{} scales linearly until 48 threads, after which speedups increase gradually. This is mainly because \texttt{c5.metal} has 48 physical cores, and scheduling beyond 48 threads happens with hyper-threading. We verified this effect by alternating how threads get scheduled across different cores; the dashed lines in \rfig{scalability-plots} show speedups when every pair of \sysname{} threads is pinned to two logical CPUs on one physical CPU. As we can see, with 48 threads but only 24 physical cores, \sysname{} only achieves a 30$\times$ speedup, whereas with 48 physical cores it achieves a 41$\times$ speedup. Since pattern exploration involves continuous random memory accesses throughout execution, hyper-threading helps in hiding memory latencies only up to an extent. \rfig{plot-utilization} verifies this, as memory bandwidth grows considerably higher when using more cores, though CPU utilization remains similar.
 
We observe that speedups also decline slightly between 24 cores and 48 cores. This is because \texttt{c5.metal} has two NUMA nodes, each allocated to 24 physical cores. We measured remote memory accesses to observe the NUMA effects: when running on 48 cores, cross-numa memory traffic was 86GB as opposed to only 4.9MB when running on 24 cores. 

\begin{figure}[t]
  \subfloat{%
    \includegraphics[width=\textwidth/4-9mm]{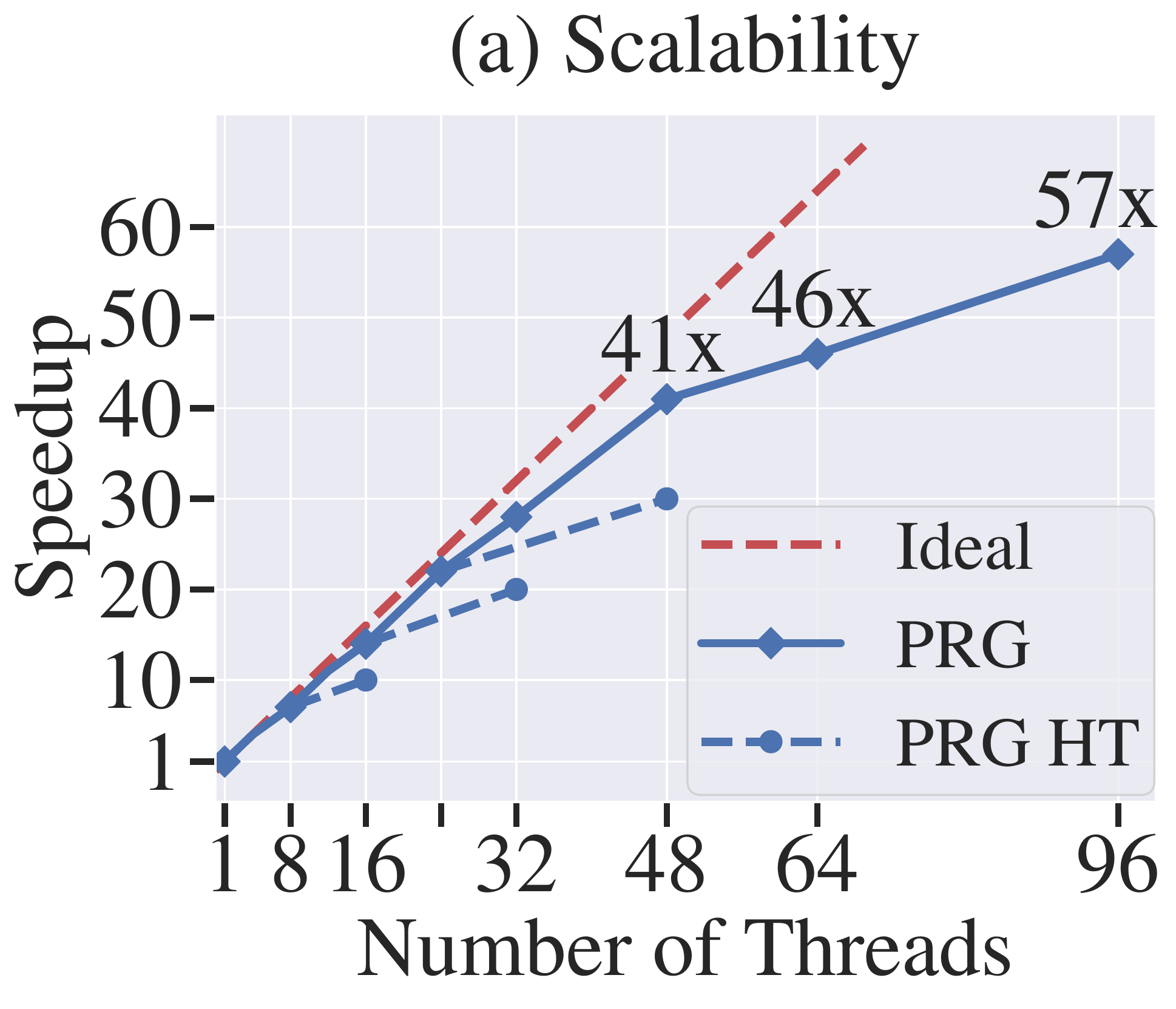}
    \label{scalability-plots}
  }
  \hfill
  \subfloat{%
    \includegraphics[width=\textwidth/4+1mm]{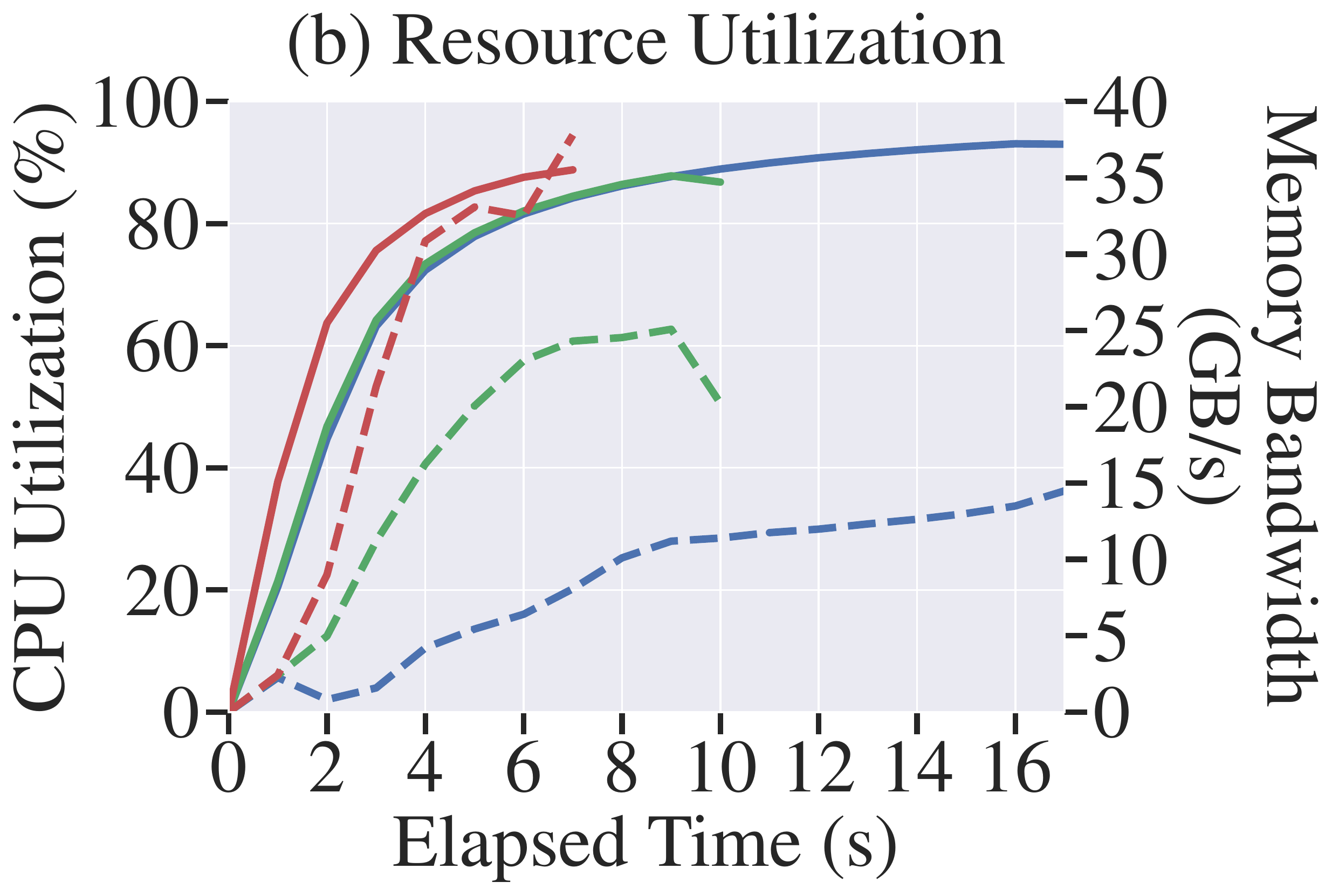}
    \label{plot-utilization}
  }
  \caption{(a) Scalability (PRG HT = hyper-threaded). \\ (b) CPU utilization (solid) and memory bandwidth (dashed) for 24 cores (blue), 47 cores (green) and 94 cores (red).}
  \label{scalability-utilization-plots}
\end{figure}

\paragraph{Resource Utilization.} \rfig{plot-utilization} shows CPU utilization and memory bandwidth consumed by \sysname{} while matching $p_1$ on Orkut on \texttt{c5.metal} with 24, 47, and 94 threads. We reserve a core for profiling to avoid its overhead. We observe that \sysname{} maintains high CPU utilization throughout its execution. The memory bandwidth curve increases over time; as high degree vertices finish processing, low degree vertices do less computation and incur more memory accesses as they get processed. 

\rfig{fig-memory} compares the peak memory usage for \sysname{} and other systems. For distributed systems we report the sum of all nodes' peak memory. \sysname{} consistently uses less memory than all the systems, mainly because of its direct pattern-aware exploration strategy. It is interesting to note that changing the pattern size in cliques and motifs does not impact \sysname{}'s memory usage. The usage is high for FSM compared to other applications due to large domain maps for support calculation. 

\paragraph{Load Balancing.}
Since \sysname{} threads dynamically pick up tasks as they become free, we observe near-zero load imbalance while matching $p_1$ across all our datasets. The difference between times taken by threads to finish all of their work was only up to 71 ms. 

\section{Related Work}
There has been a variety of research to develop efficient graph mining solutions. 
To the best of our knowledge, \sysname{} is the first general-purpose graph mining system to leverage pattern-awareness in its programming and processing models. 

\begin{figure}[t]
\centering
  \includegraphics[width=3.3in]{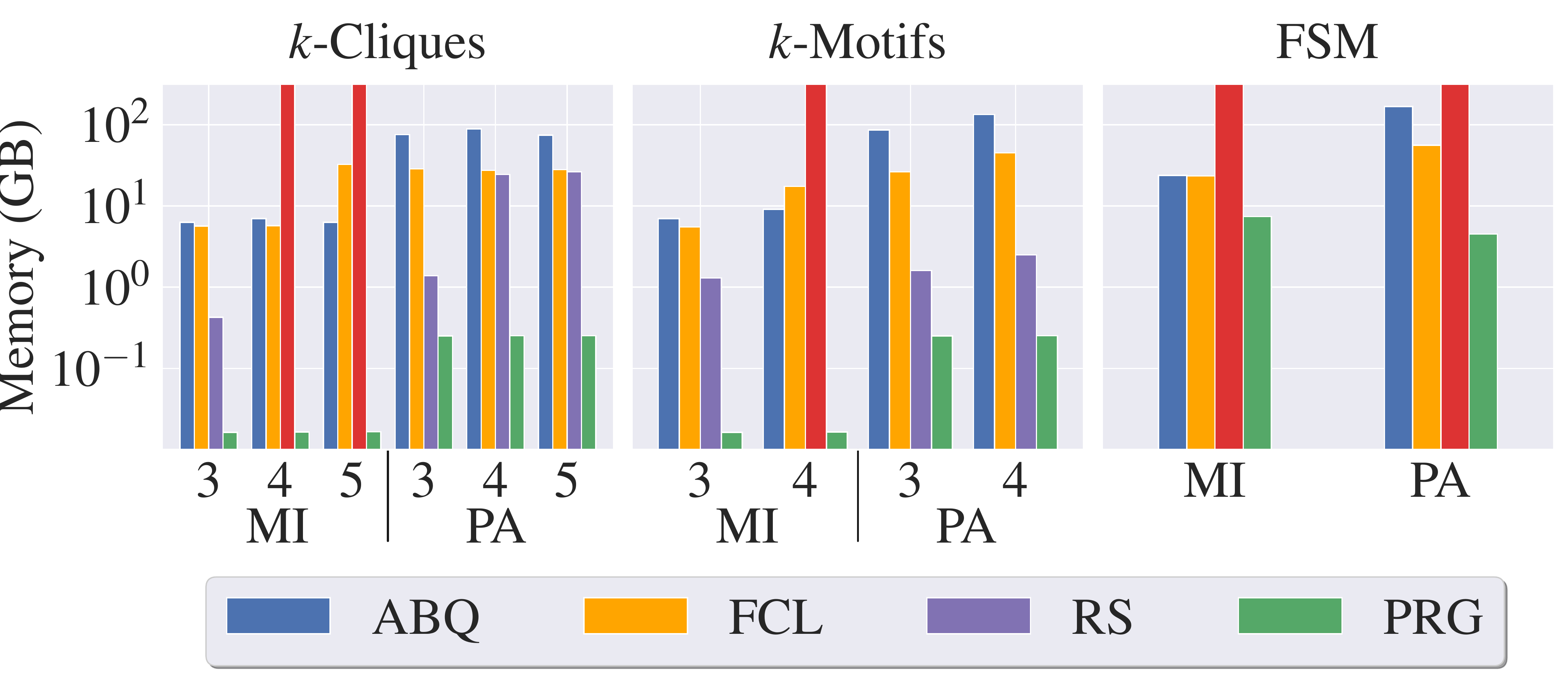}
  \caption{Peak memory usage of different systems across various applications. Tall red bars represent RStream out of memory errors.}
  \label{fig-memory}
\end{figure}

\paragraph{General-Purpose Graph Mining Systems.}
Several general-purpose graph mining systems have been developed~\cite{arabesque,rstream,asap,fractal,automine,gminer}. 
Arabesque~\cite{arabesque} is a distributed graph mining system that follows a filter-process model developed on top of map-reduce. It proposed the ``Think Like an Embedding'' processing model. 
Fractal~\cite{fractal} extends this to the concept of \emph{fractoids}, which expose parts of the user program to the system; in conjunction with depth-first exploration, fractoids allow the system to more intelligently plan its execution. 
G-Miner~\cite{gminer} is a task-oriented distributed graph mining system that enables building custom graph mining use cases using a distributed task queue.
RStream~\cite{rstream} is a single machine out-of-core graph mining system that leverages SSDs to store intermediate solutions. It uses relational algebra to express mining tasks as table joins. 
AutoMine~\cite{automine} is a recent single-machine system that generates efficient code to match patterns for common graph mining tasks. 
As discussed in \rsec{sec-issueswithgraphminingsystems}, none of these systems are fully pattern-aware, the way \sysname{} is: these systems perform unnecessary explorations and computations, require large memory (or storage) capacity, and lack the ability to easily express mining tasks at a high level. While Fractal uses symmetry breaking for pattern matching use case, other applications like FSM and motif counting are not guided by symmetry breaking, and hence they end up performing unnecessary explorations. Similarly, AutoMine also does not employ symmetry breaking for any of the use cases, requiring users to filter duplicate matches by individually examining every single match when enumerating patterns. Lack of full pattern-awareness not only makes these systems slower, but also limits their applicability to more complex mining use cases. 
Finally, ASAP~\cite{asap} is a programmable distributed system for approximate graph mining where users write programs based on sampling edges and vertices to reason about the probabilistic counts of patterns.

\paragraph{Purpose-Built Graph Mining Solutions.}
These works efficiently perform specific graph mining tasks.
ApproxG~\cite{approxg} is an efficient system for computing approximate graphlet (motif) counts with accuracy guarantees. \cite{efficient-graphlets} uses combinatorial arguments to obtain counts for size 3 and 4 motifs after counting smaller motifs. \cite{kclist} efficiently lists $k$-cliques in sparse graphs and \cite{maxkplex} is aimed at $k$-plexes which are clique-like structures. GraMi~\cite{mico} leverages anti-monotonicity for FSM on a single machine while ScaleMine~\cite{scalemine} is a distributed system for FSM that uses efficiently computable approximate stats to inform its graph exploration. \cite{distgraph} is also a distributed system focusing on FSM. \cite{maverick, exemplars} are recent works aimed at analyzing small graphs whose edges have large attribute sets.

Several systems aim to perform efficient pattern matching. 
OPT~\cite{opt} is a fast single-machine out-of-core triangle-counting system whose techniques are generalized by DualSim~\cite{dualsim} to match arbitrary patterns. 
\cite{tcshun} proposes several provably cache-friendly parallel triangle-counting algorithms which provide order-of-magnitude speedups over previous algorithms. DistTC~\cite{disttc} presents a distributed triangle-counting technique that leverages a novel graph partitioning strategy to count triangles with minimal communication overhead.

\cite{seed} is a distributed map-reduce based pattern matching system that first finds small patterns and joins them into large ones. QFrag~\cite{qfrag} is another map-reduce based distributed pattern matching system that focuses on searching graphs for large patterns using the TurboISO~\cite{turboiso} algorithm. PruneJuice~\cite{prunejuice} is a distributed pattern matching system that focuses on pruning data graph vertices that cannot contribute to a match. 
\cite{daf} is a scalable subgraph isomorphism algorithm while TurboFlux~\cite{turboflux} performs pattern matching on dynamically changing data graphs. \cite{hybridoptimizer} presents a pattern matching plan optimizer incorporated in Graphflow~\cite{graphflow} that uses both binary and multi-way joins. \cite{pgxdasync} is a resource-aware distributed graph querying system for property graphs. 

\paragraph{Graph Processing Systems.}
Several works enable processing static and dynamic graphs~\cite{pregel,powergraph,galois,ligra,graphbolt,chaos,graphx,gemini,kickstarter,aspen,graphone,aspire,coral,gps,pgxd}. These systems typically compute values on vertices and edges rather than analyzing substructures in graphs. They decompose computation at vertex and edge level, which is not suitable for graph mining use cases.

\section{Conclusion}
We presented \sysname{}, a pattern-aware graph mining system that efficiently explores subgraph structures of interest, and scales to complex graph mining tasks on large graphs. \sysname{} uses `pattern-based programming' that treats \mbox{\emph{patterns}} as first class constructs. We further introduced two novel abstractions: \textsc{Anti-Edge} and \textsc{Anti-Vertex}, that express advanced structural constraints on patterns to be matched. This allows users to directly operate on patterns and easily express complex mining use cases as `pattern programs' on \sysname{}. 

Our extensive evaluation showed that \sysname{} outperforms the existing state-of-the-art by several orders of magnitude, even when it has access to up to 8$\times$ fewer CPU cores. Furthermore, \sysname{} successfully handles resource-intensive graph mining tasks on billion-scale graphs on a single machine, while the state-of-the-art fails even with a cluster of 8 such machines or access to large SSDs. 

\section*{Acknowledgments}
We would like to thank our shepherd Aleksandar Prokopec and the anonymous reviewers for their valuable and thorough feedback. This work is supported by the Natural Sciences and Engineering Research Council of Canada.

\balance
\bibliographystyle{plain}
\setlength{\bibsep}{1pt plus 0.3ex}
\bibliography{paper}

\begin{thebibliography}{10}

\bibitem{scalemine}
Ehab Abdelhamid, Ibrahim Abdelaziz, Panos Kalnis, Zuhair Khayyat, and Fuad
  Jamour.
\newblock ScaleMine: Scalable Parallel Frequent Subgraph Mining in a Single
  Large Graph.
\newblock In {\em Proceedings of the International Conference for High
  Performance Computing, Networking, Storage and Analysis (SC '16)}, pages
  61:1--61:12, 2016.

\bibitem{efficient-graphlets}
Nesreen~K. Ahmed, Jennifer Neville, Ryan~A. Rossi, and Nick Duffield.
\newblock Efficient Graphlet Counting for Large Networks.
\newblock In {\em IEEE International Conference on Data Mining (ICDM '15)},
  pages 1--10, 2015.

\bibitem{bearman-chains}
Peter~S. Bearman, James Moody, and Katherine Stovel.
\newblock Chains of Affection: The Structure of Adolescent Romantic and Sexual
  Networks.
\newblock {\em American Journal of Sociology}, 110(1):44--91, 2004.

\bibitem{maxkplex}
Devora Berlowitz, Sara Cohen, and Benny Kimelfeld.
\newblock Efficient Enumeration of Maximal k-Plexes.
\newblock In {\em Proceedings of the ACM International Conference on Management
  of Data (SIGMOD '15)}, pages 431--444, 2015.

\bibitem{postponecart}
Fei Bi, Lijun Chang, Xuemin Lin, Lu~Qin, and Wenjie Zhang.
\newblock Efficient Subgraph Matching by Postponing Cartesian Products.
\newblock In {\em Proceedings of the ACM International Conference on Management
  of Data (SIGMOD '16)}, pages 1199--1214, 2016.

\bibitem{fsm-mni}
Bjorn Bringmann and Siegfried Nijssen.
\newblock What Is Frequent in a Single Graph?
\newblock In {\em Advances in Knowledge Discovery and Data Mining: 12th
  Pacific-Asia Conference}, volume 5012, pages 858--863, 2008.

\bibitem{roaring}
Samy Chambi, Daniel Lemire, Owen Kaser, and Robert Godin.
\newblock Better Bitmap Performance with Roaring Bitmaps.
\newblock {\em Software: Practice and Experience}, 46(5):709--719, 2016.

\bibitem{gminer}
Hongzhi Chen, Miao Liu, Yunjian Zhao, Xiao Yan, Da~Yan, and James Cheng.
\newblock G-Miner: An Efficient Task-oriented Graph Mining System.
\newblock In {\em Proceedings of the European Conference on Computer Systems
  (EuroSys '18)}, pages 32:1--32:12, 2018.

\bibitem{fsmcompvis}
Wei-Ta Chu and Ming-Hung Tsai.
\newblock Visual pattern discovery for architecture image classification and
  product image search.
\newblock In {\em Proceedings of the ACM International Conference on Multimedia
  Retrieval (ICMR '12)}, pages 1--8, 2012.

\bibitem{kclist}
Maximilien Danisch, Oana Balalau, and Mauro Sozio.
\newblock Listing K-cliques in Sparse Real-World Graphs*.
\newblock In {\em Proceedings of the World Wide Web Conference (WWW '18)},
  pages 589--598, 2018.

\bibitem{aspen}
Laxman Dhulipala, Guy~E Blelloch, and Julian Shun.
\newblock Low-Latency Graph Streaming Using Compressed Purely-Functional Trees.
\newblock In {\em Proceedings of the ACM SIGPLAN Conference on Programming
  Language Design and Implementation (PLDI '19)}, pages 918--934, 2019.

\bibitem{fractal}
Vinicius Dias, Carlos H.~C. Teixeira, Dorgival Guedes, Wagner Meira, and
  Srinivasan Parthasarathy.
\newblock Fractal: A General-Purpose Graph Pattern Mining System.
\newblock In {\em Proceedings of the ACM International Conference on Management
  of Data (SIGMOD '19)}, pages 1357--1374, 2019.

\bibitem{mico}
Mohammed Elseidy, Ehab Abdelhamid, Spiros Skiadopoulos, and Panos Kalnis.
\newblock GraMi: Frequent Subgraph and Pattern Mining in a Single Large Graph.
\newblock In {\em Proceedings of the VLDB Endowment (PVLDB '14)}, pages
  517--528, 2014.

\bibitem{powergraph}
Joseph~E. Gonzalez, Yucheng Low, Haijie Gu, Danny Bickson, and Carlos Guestrin.
\newblock PowerGraph: Distributed Graph-parallel Computation on Natural Graphs.
\newblock In {\em Proceedings of the USENIX Conference on Operating Systems
  Design and Implementation (OSDI '12)}, pages 17--30, 2012.

\bibitem{graphx}
Joseph~E. Gonzalez, Reynold~S. Xin, Ankur Dave, Daniel Crankshaw, Michael~J.
  Franklin, and Ion Stoica.
\newblock {GraphX}: Graph Processing in a Distributed Dataflow Framework.
\newblock In {\em Proceedings of the USENIX Conference on Operating Systems
  Design and Implementation (OSDI '14)}, pages 599--613, 2014.

\bibitem{po}
Joshua~A. Grochow and Manolis Kellis.
\newblock Network Motif Discovery Using Subgraph Enumeration and
  Symmetry-Breaking.
\newblock In {\em Research in Computational Molecular Biology}, pages 92--106,
  2007.

\bibitem{patents}
Bronwyn Hall, Adam Jaffe, and Manuel Trajtenberg.
\newblock The NBER Patent Citation Data File: Lessons, Insights and
  Methodological Tools.
\newblock {\em NBER Working Paper 8498}, 2001.

\bibitem{daf}
Myoungji Han, Hyunjoon Kim, Geonmo Gu, Kunsoo Park, and Wook-Shin Han.
\newblock Efficient Subgraph Matching: Harmonizing Dynamic Programming,
  Adaptive Matching Order, and Failing Set Together.
\newblock In {\em Proceedings of the ACM International Conference on Management
  of Data (SIGMOD '19)}, pages 1429--1446, 2019.

\bibitem{turboiso}
Wook-Shin Han, Jinsoo Lee, and Jeong-Hoon Lee.
\newblock TurboISO: Towards Ultrafast and Robust Subgraph Isomorphism Search in
  Large Graph Databases.
\newblock In {\em Proceedings of the ACM International Conference on Management
  of Data (SIGMOD '13)}, pages 337--348, 2013.

\bibitem{disttc}
Loc Hoang, Vishwesh Jatala, Xuhao Chen, Udit Agarwal, Roshan Dathathri,
  Gurbinder Gill, and Keshav Pingali.
\newblock DistTC: High Performance Distributed Triangle Counting.
\newblock In {\em IEEE High Performance Extreme Computing Conference (HPEC
  '19)}, pages 1--7, 2019.

\bibitem{pgxd}
Sungpack Hong, Siegfried Depner, Thomas Manhardt, Jan Van Der~Lugt, Merijn
  Verstraaten, and Hassan Chafi.
\newblock PGX.D: a fast distributed graph processing engine.
\newblock In {\em Proceedings of the International Conference for High
  Performance Computing, Networking, Storage and Analysis (SC '15)}, pages
  1--12, 2015.

\bibitem{asap}
Anand~Padmanabha Iyer, Zaoxing Liu, Xin Jin, Shivaram Venkataraman, Vladimir
  Braverman, and Ion Stoica.
\newblock {ASAP}: Fast, Approximate Graph Pattern Mining at Scale.
\newblock In {\em Proceedings of the USENIX Symposium on Operating Systems
  Design and Implementation (OSDI '18)}, pages 745--761, Carlsbad, CA, 2018.

\bibitem{graphflow}
Chathura Kankanamge, Siddhartha Sahu, Amine Mhedbhi, Jeremy Chen, and Semih
  Salihoglu.
\newblock Graphflow: An Active Graph Database.
\newblock In {\em Proceedings of the ACM International Conference on Management
  of Data (SIGMOD '17)}, pages 1695--1698, 2017.

\bibitem{dualsim}
Hyeonji Kim, Juneyoung Lee, Sourav~S. Bhowmick, Wook-Shin Han, JeongHoon Lee,
  Seongyun Ko, and Moath~H.A. Jarrah.
\newblock DUALSIM: Parallel Subgraph Enumeration in a Massive Graph on a Single
  Machine.
\newblock In {\em Proceedings of the ACM International Conference on Management
  of Data (SIGMOD '16)}, pages 1231--1245, 2016.

\bibitem{opt}
Jinha Kim, Wook-Shin Han, Sangyeon Lee, Kyungyeol Park, and Hwanjo Yu.
\newblock OPT: A New Framework for Overlapped and Parallel Triangulation in
  Large-scale Graphs.
\newblock In {\em Proceedings of the ACM International Conference on Management
  of Data (SIGMOD '14)}, pages 637--648, 2014.

\bibitem{turboflux}
Kyoungmin Kim, In~Seo, Wook-Shin Han, Jeong-Hoon Lee, Sungpack Hong, Hassan
  Chafi, Hyungyu Shin, and Geonhwa Jeong.
\newblock TurboFlux: A Fast Continuous Subgraph Matching System for Streaming
  Graph Data.
\newblock In {\em Proceedings of the ACM International Conference on Management
  of Data (SIGMOD '18)}, pages 411--426, 2018.

\bibitem{globalclustering}
Anton Korshunov, Ivan Beloborodov, Nazar Buzun, Valeriy Avanesov, Roman
  Pastukhov, Kyrylo Chykhradze, Ilya Kozlov, Andrey Gomzin, Ivan Andrianov,
  Andrey Sysoev, Stepan Ipatov, Ilya Filonenko, Christina Chuprina, Denis
  Turdakov, and Sergey Kuznetsov.
\newblock Social Network Analysis: Methods and Applications.
\newblock In {\em Proceedings of the Institute for System Programming of RAS},
  pages 439--456, 2014.

\bibitem{cliquescompchem}
Frederick~S. Kuhl, Gordon~M. Crippen, and Donald~K. Friesen.
\newblock A combinatorial algorithm for calculating ligand binding.
\newblock {\em Journal of Computational Chemistry}, 5(1):24--34, 1984.

\bibitem{graphone}
Pradeep Kumar and H~Howie Huang.
\newblock GraphOne: A Data Store for Real-Time Analytics on Evolving Graphs.
\newblock In {\em Proceedings of the USENIX Conference on File and Storage
  Technologies (FAST '19)}, pages 249--263, 2019.

\bibitem{mis-support}
Michihiro Kuramochi and George Karypis.
\newblock Finding Frequent Patterns in a Large Sparse Graph*.
\newblock {\em Data Mining and Knowledge Discovery}, 11(3):243--271, 2005.

\bibitem{seed}
Longbin Lai, Lu~Qin, Xuemin Lin, Ying Zhang, Lijun Chang, and Shiyu Yang.
\newblock Scalable Distributed Subgraph Enumeration.
\newblock In {\em Proceedings of the VLDB Endowment (PVLDB '16)}, pages
  217--228, 2016.

\bibitem{pregel}
Grzegorz Malewicz, Matthew~H. Austern, Aart J.~C. Bik, James~C. Dehnert, Ilan
  Horn, Naty Leiser, Grzegorz Czajkowski, and Google Inc.
\newblock Pregel: A System for Large-Scale Graph Processing.
\newblock In {\em Proceedings of the ACM International Conference on Management
  of Data (SIGMOD '10)}, pages 135--146, 2010.

\bibitem{graphbolt}
Mugilan Mariappan and Keval Vora.
\newblock GraphBolt: Dependency-Driven Synchronous Processing of Streaming
  Graphs.
\newblock In {\em Proceedings of the European Conference on Computer Systems
  (EuroSys '19)}, pages 25:1--25:16, 2019.

\bibitem{automine}
Daniel Mawhirter and Bo~Wu.
\newblock AutoMine: Harmonizing High-level Abstraction and High Performance for
  Graph Mining.
\newblock In {\em Proceedings of the ACM Symposium on Operating Systems
  Principles (SOSP '19)}, pages 509--523, 2019.

\bibitem{approxg}
Daniel Mawhirter, Bo~Wu, Dinesh Mehta, and Chao Ai.
\newblock ApproxG: Fast Approximate Parallel Graphlet Counting Through Accuracy
  Control.
\newblock In {\em IEEE/ACM International Symposium on Cluster, Cloud and Grid
  Computing (CCGRID '18)}, pages 533--542, 2018.

\bibitem{sna-police}
Jean McGloin and David Kirk.
\newblock An Overview of Social Network Analysis.
\newblock {\em Journal of Criminal Justice Education}, 21:169--181, 2010.

\bibitem{costperformance}
Frank McSherry, Michael Isard, and Derek~G. Murray.
\newblock Scalability! But at what {COST}?
\newblock In {\em Workshop on Hot Topics in Operating Systems (HotOS {XV})},
  2015.

\bibitem{merged-support}
Jinghan Meng and Yi-cheng Tu.
\newblock Flexible and Feasible Support Measures for Mining Frequent Patterns
  in Large Labeled Graphs.
\newblock In {\em Proceedings of the ACM International Conference on Management
  of Data (SIGMOD '17)}, page 391–402, 2017.

\bibitem{app-specific-support}
Pieter Meysman, Yvan Saeys, Ehsan Sabaghian, Wout Bittremieux, Yves Van~de
  Peer, Bart Goethals, and Kris Laukens.
\newblock Discovery of Significantly Enriched Subgraphs Associated with
  Selected Vertices in a Single Graph.
\newblock In {\em Proceedings of the International Workshop on Data Mining in
  Bioinformatics (BIOKDD '15)}, volume~15, pages 1--8, 2015.

\bibitem{hybridoptimizer}
Amine Mhedhbi and Semih Salihoglu.
\newblock Optimizing Subgraph Queries by Combining Binary and Worst-Case
  Optimal Joins.
\newblock In {\em Proceedings of the VLDB Endowment (PVLDB '19)}, page
  1692–1704, 2019.

\bibitem{motifsbiology}
Ron Milo, Shai Shen-Orr, Shalev Itzkovitz, N~Kashtan, Dmitri Chklovskii, and
  Uri Alon.
\newblock Network Motifs: Simple Building Blocks of Complex Networks.
\newblock {\em Science}, 298:824--7, 2002.

\bibitem{galois}
Donald Nguyen, Andrew Lenharth, and Keshav Pingali.
\newblock A Lightweight Infrastructure for Graph Analytics.
\newblock In {\em Proceedings of the ACM Symposium on Operating Systems
  Principles (SOSP '13)}, pages 456--471, 2013.

\bibitem{prunejuice}
Tahsin Reza, Matei Ripeanu, Nicolas Tripoul, Geoffrey Sanders, and Roger
  Pearce.
\newblock PruneJuice: Pruning Trillion-edge Graphs to a Precise
  Pattern-matching Solution.
\newblock In {\em Proceedings of the International Conference for High
  Performance Computing, Networking, Storage, and Analysis (SC '18)}, pages
  21:1--21:17, 2018.

\bibitem{pgxdasync}
Nicholas~P. Roth, Vasileios Trigonakis, Sungpack Hong, Hassan Chafi, Anthony
  Potter, Boris Motik, and Ian Horrocks.
\newblock PGX.D/Async: A Scalable Distributed Graph Pattern Matching Engine.
\newblock In {\em Proceedings of the International Workshop on Graph
  Data-Management Experiences \& Systems (GRADES '17)}, 2017.

\bibitem{chaos}
Amitabha Roy, Laurent Bindschaedler, Jasmina Malicevic, and Willy Zwaenepoel.
\newblock Chaos: Scale-out Graph Processing from Secondary Storage.
\newblock In {\em Proceedings of the ACM Symposium on Operating Systems
  Principles (SOSP '15)}, pages 410--424, 2015.

\bibitem{gps}
Semih Salihoglu and Jennifer Widom.
\newblock GPS: A Graph Processing System.
\newblock In {\em Proceedings of the International Conference on Scientific and
  Statistical Database Management (SSDBM '13)}, 2013.

\bibitem{qfrag}
Marco Serafini, Gianmarco De~Francisci~Morales, and Georgos Siganos.
\newblock QFrag: Distributed Graph Search via Subgraph Isomorphism.
\newblock In {\em Proceedings of the Symposium on Cloud Computing (SoCC '17)},
  pages 214--228, 2017.

\bibitem{ligra}
Julian Shun and Guy~E. Blelloch.
\newblock Ligra: A Lightweight Graph Processing Framework for Shared Memory.
\newblock In {\em Proceedings of the ACM SIGPLAN Symposium on Principles and
  Practice of Parallel Programming (PPoPP '13)}, pages 135--146, 2013.

\bibitem{tcshun}
Julian Shun and Kanat Tangwongsan.
\newblock Multicore triangle computations without tuning.
\newblock In {\em IEEE International Conference on Data Engineering (ICDE
  '15)}, pages 149--160, 2015.

\bibitem{exemplars}
Qi~Song, Mohammad~Hossein Namaki, and Yinghui Wu.
\newblock Answering Why-Questions for Subgraph Queries in Multi-attributed
  Graphs.
\newblock In {\em IEEE International Conference on Data Engineering (ICDE
  '19)}, pages 40--51, 2019.

\bibitem{distgraph}
Nilothpal Talukder and Mohammed~J. Zaki.
\newblock A Distributed Approach for Graph Mining in Massive Networks.
\newblock {\em Data Mining and Knowledge Discovery}, 30(5):1024--1052, 2016.

\bibitem{arabesque}
Carlos H.~C. Teixeira, Alexandre~J. Fonseca, Marco Serafini, Georgos Siganos,
  Mohammed~J. Zaki, and Ashraf Aboulnaga.
\newblock Arabesque: A System for Distributed Graph Mining.
\newblock In {\em Proceedings of the ACM Symposium on Operating Systems
  Principles (SOSP '15)}, pages 425--440, 2015.

\bibitem{vf2}
Julian Ullmann.
\newblock Bit-vector Algorithms for Binary Constraint Satisfaction and Subgraph
  Isomorphism.
\newblock {\em Journal of Experimental Algorithmics}, 15:1--1, 2011.

\bibitem{kickstarter}
Keval Vora, Rajiv Gupta, and Guoqing Xu.
\newblock {KickStarter}: Fast and Accurate Computations on Streaming Graphs via
  Trimmed Approximations.
\newblock In {\em Proceedings of the International Conference on Architectural
  Support for Programming Languages and Operating Systems (ASPLOS '17)}, pages
  237--251, 2017.

\bibitem{aspire}
Keval Vora, Sai~Charan Koduru, and Rajiv Gupta.
\newblock ASPIRE: Exploiting Asynchronous Parallelism in Iterative Algorithms
  Using a Relaxed Consistency Based DSM.
\newblock In {\em Proceedings of SIGPLAN International Conference on Object
  Oriented Programming Systems Languages and Applications (OOPSLA '14)}, page
  861–878, 2014.

\bibitem{coral}
Keval Vora, Chen Tian, Rajiv Gupta, and Ziang Hu.
\newblock CoRAL: Confined Recovery in Distributed Asynchronous Graph
  Processing.
\newblock In {\em Proceedings of the International Conference on Architectural
  Support for Programming Languages and Operating Systems (ASPLOS '17)}, page
  223–236, 2017.

\bibitem{rstream}
Kai Wang, Zhiqiang Zuo, John Thorpe, Tien~Quang Nguyen, and Guoqing~Harry Xu.
\newblock RStream: Marrying Relational Algebra with Streaming for Efficient
  Graph Mining on a Single Machine.
\newblock In {\em Proceedings of the USENIX Conference on Operating Systems
  Design and Implementation (OSDI '18)}, pages 763--782, 2018.

\bibitem{independence-support}
Yuyi Wang and Jan Ramon.
\newblock An Efficiently Computable Support Measure for Frequent Subgraph
  Pattern Mining.
\newblock In {\em Machine Learning and Knowledge Discovery in Databases (ECML
  PKDD '12)}, pages 362--377, 2012.

\bibitem{fakenews}
Liang Wu and Huan Liu.
\newblock Tracing Fake-News Footprints: Characterizing Social Media Messages by
  How They Propagate.
\newblock In {\em Proceedings of the ACM International Conference on Web Search
  and Data Mining (WSDM '18)}, pages 637--645, 2018.

\bibitem{snap}
Jaewon Yang and Jure Leskovec.
\newblock Defining and Evaluating Network Communities based on Ground-Truth.
\newblock {\em Knowledge and Information Systems}, 42(1):181--213, 2015.

\bibitem{maverick}
Gensheng Zhang, Damian Jimenez, and Chengkai Li.
\newblock Maverick: Discovering Exceptional Facts from Knowledge Graphs.
\newblock In {\em Proceedings of the ACM International Conference on Management
  of Data (SIGMOD '18)}, pages 1317--1332, 2018.

\bibitem{gemini}
Xiaowei Zhu, Wenguang Chen, Weimin Zheng, and Xiaosong Ma.
\newblock Gemini: A Computation-Centric Distributed Graph Processing System.
\newblock In {\em Proceedings of the USENIX Symposium on Operating Systems
  Design and Implementation (OSDI '16)}, pages 301--316, 2016.

\end{thebibliography}

\end{document}